\documentclass{iopjournal}

\makeatletter
\renewcommand{\articletype}[1]{%
  \vspace*{-8mm}%
  \noindent{\scriptsize\sffamily\bfseries\MakeUppercase{#1}}%
  \par\vspace{8mm}%
}

\renewcommand{\headrulewidth}{0pt}
\renewcommand{\footrulewidth}{0pt}
\makeatother

\usepackage{amssymb}
\usepackage{amsmath}
\usepackage{framed}
\usepackage{bm}
\usepackage{cite}
\usepackage{comment}
\usepackage{algorithmic}
\usepackage{algorithm}
\usepackage{url}
\usepackage{booktabs}
\usepackage{array}
\usepackage{braket}
\usepackage{titlesec}

\usepackage{mathtools}

\usepackage{amsfonts}
\usepackage{color}
\usepackage{cases}
\usepackage{enumerate}

\newtheorem{theorem}{Theorem}
\newtheorem{assumption}{Assumption}

\newtheorem{remark}{Remark}

\newtheorem{proposition}{Proposition}

\newtheorem{proof}{Proof}

\usepackage{blkarray}

\newcommand{\rd}{\mathrm{d}}

\newcommand{\RR}{\mathbb{R}}

\newcommand{\CC}{\mathbb{C}}

\usepackage{blkarray}

\usepackage{latexsym}

\usepackage{enumitem} 
\newlist{assump}{enumerate}{1}
\setlist[assump]{label=A\arabic*, ref=A\arabic*, leftmargin=*, labelindent=0pt}

\usepackage{cite}
\usepackage{url}

\begin{document}

\articletype{Paper} 
%
\title{Controllability-Based Input Allocation in Quantum Systems}

\author{
Riku Yaegashi$^{1,*}$,
Kazuhiro Sato$^1$
and Kentaro Ohki$^2$
}

\affil{$^1$Department of Mathematical Informatics,
Graduate School of Information Science and Technology,
The University of Tokyo,
Tokyo 113-8656, Japan}

\affil{$^2$Department of Applied Computer Engineering,
School of Information Science and Technology,
Tokai University,
Kanagawa 259-1292, Japan}

\affil{$^*$Author to whom any correspondence should be addressed.}

\email{yaegashi-riku184@g.ecc.u-tokyo.ac.jp}

\begin{abstract}

Understanding how local perturbations influence global quantum dynamics is important
for analyzing and designing large-scale quantum systems.
Existing quantum centrality measures often rely on static quantities such as correlations or entanglement and therefore
do not directly capture the dynamical propagation of perturbations under unitary evolution.
In this paper, we formulate controllability allocation scoring for quantum input-output dynamics governed by the Schrödinger equation.
Building on the multiobservable quantum-control Gramian, we introduce virtual local Hamiltonian perturbations along a nominal trajectory and resolve the terminal-output response into channel-wise Gramian contributions.
These contributions are combined through a simplex-constrained log-determinant optimization, yielding a volumetric controllability allocation score (V-CAS) that specifies an optimal allocation of virtual input intensities among local Hamiltonian perturbation directions. 
The resulting allocation provides an output- and trajectory-dependent measure of their cooperative dynamical influence.
We characterize the range and minimum-energy properties of the resulting output Gramian and derive a drift-independent sufficient condition that guarantees uniqueness of the score for almost every terminal time at which the scoring problem is feasible.
For quantum many-body systems, we develop an implementation based on matrix product states and time-evolving block decimation that avoids explicitly constructing exponentially large state-transition matrices.
We further derive an error bound that propagates Trotter and MPS-truncation errors to the discrete output Gramian.
Numerical experiments on quantum many-body models show that the resulting score captures finite-time influence patterns that can differ from those identified by static correlation- and entanglement-based measures and can provide useful information for target-independent control-channel preselection.

\end{abstract}
\keywords{quantum control, output controllability Gramian, dynamical centrality, controllability allocation scoring, matrix product states}

\section{Introduction}\label{sec:introduction}
As quantum systems scale in size and complexity, understanding how global dynamics emerge from local interactions becomes increasingly important.
This issue is particularly relevant in many-body settings, where available control and measurement channels are often limited.
It is therefore useful to identify which local subsystems or Hamiltonian directions exert strong dynamical influence during the system's natural evolution.
However, the correlation- and entanglement-based quantum centrality measures considered in this work \cite{sundar2018complex,nokkala2024complex} are evaluated from static state properties and do not directly quantify how local Hamiltonian perturbations propagate under unitary dynamics.

To address this limitation, we consider a controllability-based allocation of virtual control inputs.
We introduce local Hamiltonian perturbations as virtual control directions and determine their optimal allocation based on the terminal-output variations that they generate along a nominal trajectory.
The resulting allocation depends on the initial state, the terminal time, the perturbation Hamiltonians, and the selected observables, and therefore characterizes trajectory-dependent dynamical influence rather than a static property of the drift Hamiltonian alone. 

Our construction is motivated by output controllability analysis in classical control theory.
Output controllability Gramians characterize reachable terminal-output directions and the input energy required to generate them \cite{kailath1980linear}.
Controllability-scoring formulations based on weighted Gramians and log-determinant optimization have also been developed to quantify the cooperative contribution of multiple input channels to the reachable set \cite{sato2024controllability}.
More recently, the controllability allocation score (CAS) was introduced to determine the optimal allocation of intervention intensities among prescribed input directions with respect to selected outputs \cite{sato2025target}.
In quantum control, a multiobservable quantum-control Gramian has been introduced to characterize the first-order responses of multiple terminal observables to control variations \cite{chakrabarti2008quantum}.

Extending this controllability allocation scoring framework to quantum systems requires several issues specific to the quantum input-output setting to be addressed.
First, physical pure states are defined only up to a global phase, whereas the linearized Schr\"odinger dynamics is naturally represented using state vectors.
The resulting sensitivity analysis must therefore be connected consistently to phase-invariant observable outputs.
Second, the optimal allocation of virtual input intensities need not be unique.
Since the allocation weights themselves define the CAS, such nonuniqueness may lead to ambiguous rankings of local Hamiltonian directions and potentially different control-channel selections based on the scores. 
This motivates the derivation of conditions that guarantee uniqueness of the optimal allocation.
Finally, in many-body systems, direct evaluation of the required sensitivities becomes computationally prohibitive because of the exponential growth of the Hilbert-space dimension.
In this paper, we formulate controllability allocation scoring for quantum input-output dynamics by combining the multiobservable quantum-control Gramian with local virtual Hamiltonian perturbations along a nominal Schr\"odinger trajectory.
The terminal-output Gramian is resolved into contributions associated with the individual perturbation directions, and these contributions are combined through a simplex-constrained log-determinant optimization.
The resulting optimal allocation specifies how virtual input intensities should be distributed among local Hamiltonian perturbation directions to maximize the volume of the first-order reachable set in the prescribed observable space.

We characterize the range and minimum-energy properties of the resulting output Gramian.
We then derive a sufficient condition for uniqueness of the controllability allocation score that depends only on the initial state, the local perturbation Hamiltonians, and the differential of the output map at the initial state.
In particular, the condition is independent of the drift Hamiltonian and guarantees affine independence of the finite-time Gramian contributions, and hence strict convexity of the scoring problem, for almost every terminal time.

For quantum many-body systems, we employ matrix product states to avoid explicitly constructing exponentially large state-transition matrices; this representation generalizes the product-state ansatz used in mean-field theory.The required sensitivities are evaluated through forward propagation of the nominal state, backward propagation of observable states, local operator applications, and MPS inner-product contractions, without explicitly constructing the full state-transition matrix.
We further quantify how Trotter and MPS-truncation errors propagate to the resulting discrete output Gramian.

The main contributions are as follows:
\begin{itemize}
    \item We formulate controllability allocation scoring for quantum input-output dynamics by resolving the first-order terminal-observable response along a nominal Schrödinger trajectory into contributions associated with local Hamiltonian perturbation directions. The resulting optimal allocation is output-dependent and trajectory-dependent and is defined through physically phase-invariant observables.    
    \item We derive a drift-independent sufficient condition, expressed only in terms of the initial state, the local perturbation Hamiltonians, and the initial output differential, that guarantees restricted injectivity for almost every terminal time and, whenever the allocation problem is feasible, uniqueness of the controllability allocation score.
    \item We develop an MPS/TEBD-compatible implementation for evaluating the channel-resolved output Gramian without explicitly constructing exponentially large state-transition matrices, together with an error bound that propagates Trotter and MPS-truncation errors to the discrete Gramian.
\end{itemize}

The paper is structured as follows.
We begin in Section~\ref{sec:Prelim} by formulating the controlled quantum input-output dynamics and introducing the output controllability Gramian together with its basic properties.
Section~\ref{sec:quantum-controllability} formulates the controllability allocation scoring problem for local Hamiltonian perturbations, establishes its mathematical properties, including a sufficient condition for uniqueness, and describes the numerical algorithm used to solve the allocation problem.
Section~\ref{sec:MPS} presents the MPS/TEBD-based implementation for quantum many-body systems and derives an error bound for the resulting Gramian approximation.
In Section~\ref{sec:numerical}, we numerically compare the proposed score with static quantum centralities, examine its use for control-channel preselection, and evaluate the accuracy and computational cost of the MPS implementation.
Finally, Section~\ref{sec:conclusion} summarizes the main results and discusses future directions.
Technical proofs, error estimates, and detailed numerical settings are provided in the appendices.

\textbf{Notation.}
We denote by $\mathbb R$ and $\mathbb C$ the fields of real and complex numbers, respectively. For an inner product space $X$, we denote its inner product by $\langle\cdot,\cdot\rangle_X$.

For a real vector or matrix $A$, we denote by $A^\top$ its transpose. 
For a complex vector or matrix $A$, we denote by $A^\dagger$ its Hermitian conjugate.

Let \(\mathcal H\) be a finite-dimensional complex Hilbert space equipped with the Hermitian inner product \(\langle\cdot,\cdot\rangle_{\mathcal H}\). The associated norm is defined by
\[
    \|\psi\|_{\mathcal H}
    :=
    \sqrt{\langle\psi,\psi\rangle_{\mathcal H}}.
\]

For Hilbert spaces \(\mathcal H_1\) and \(\mathcal H_2\), we denote by \(\mathcal L(\mathcal H_1,\mathcal H_2)\) the space of linear maps from \(\mathcal H_1\) to \(\mathcal H_2\). In particular, we write
\[
    \mathcal L(\mathcal H):=\mathcal L(\mathcal H,\mathcal H).
\]
An element of \(\mathcal L(\mathcal H)\) is called an operator on \(\mathcal H\).

When the system is composed of \(n\) local subsystems, we write
\[
    \mathcal H
    =
    \bigotimes_{\ell=1}^n \mathcal H_\ell ,
\]
where \(\mathcal H_\ell\) is the local Hilbert space at site \(\ell\).
Throughout this paper, \(\otimes\) denotes the Hilbert-space tensor product.

For a local operator \(A\in\mathcal L(\mathcal H_\ell)\), we denote by
\(A^{(\ell)}\in\mathcal L(\mathcal H)\) its extension to the total Hilbert
space, defined by
\[
    A^{(\ell)}
    :=
    I_1\otimes\cdots\otimes I_{\ell-1}
    \otimes A
    \otimes I_{\ell+1}\otimes\cdots\otimes I_n ,
\]
where \(I_r\in\mathcal L(\mathcal H_r)\) is the identity operator on
\(\mathcal H_r\). We call \(X\in\mathcal L(\mathcal H)\) a single-site local
operator if there exist a site \(\ell\in\{1,\ldots,n\}\) and an operator
\(A\in\mathcal L(\mathcal H_\ell)\) such that $    X=A^{(\ell)}.$
Unless otherwise stated, all states and operators are understood as objects on
the total Hilbert space \(\mathcal H\). The tensor-product structure is used
to specify the locality of operators.

Since we consider pure states in a finite-dimensional Hilbert space \(\mathcal H\), we fix an orthonormal basis and identify \(\mathcal H\) with \(\mathbb C^d\), where \(d=\dim\mathcal H\). Under this identification, the Hermitian inner product is represented by
\[
\langle z,w\rangle_{\mathbb C^d}:=z^\dagger w,
\qquad z,w\in\mathbb C^d.
\]
Physical pure states can then be identified with elements of the complex projective space \(\mathbb{CP}^{d-1}\).

To distinguish normalized state vectors from physical states, we define
\[
S_{\mathbb C^d}:=\{z\in\mathbb C^d\mid\|z\|=1\},
\qquad \|z\|:=\sqrt{z^\dagger z},
\]
and
\begin{gather}
    \label{eq:quantum-state-quotient}
    \mathcal M:=S_{\mathbb C^d}/U(1)\simeq\mathbb{CP}^{d-1},    
\end{gather}
where \(U(1):=\{e^{i\theta}\mid\theta\in\mathbb R\}\). The canonical quotient map is defined by
\[
\pi:S_{\mathbb C^d}\to\mathcal M,
\qquad \pi(\psi)=[\psi],
\]
where \([\psi]:=\{e^{i\theta}\psi\mid\theta\in\mathbb R\}\).
Throughout this paper, \(\psi\in S_{\mathbb C^d}\) denotes a normalized state vector, whereas \(x=[\psi]\in\mathcal M\) denotes the corresponding physical pure state.

With respect to the preceding orthonormal basis, every operator \(A\in\mathcal L(\mathcal H)\) is represented by a matrix in \(\CC^{d\times d}\). In particular, a self-adjoint operator on \(\mathcal H\) is represented by a Hermitian matrix.

We write \(\mathrm{i}\) for the imaginary unit.
For a complex number \(z=x+\mathrm{i}y\in\CC\), \(\operatorname{Re}(z)=x\) denotes its real part.

We regard \(\mathcal S_{\CC^d}\) as a real smooth manifold. For \(z\in\mathcal S_{\CC^d}\), the tangent space \(\mathcal T_z\mathcal S_{\CC^d}\) is the set of velocities at \(t=0\) of smooth curves on \(\mathcal S_{\CC^d}\) passing through \(z\). 

We denote by \(L^2([0,T];\RR^m)\) the space of square-integrable
\(\RR^m\)-valued measurable functions on \([0,T]\), equipped with the
standard inner product
\[
    \langle u,v\rangle_{L^2}
    :=
    \int_0^T u(t)^\top v(t)\,\rd t .
\]

For inner product spaces \(X\) and \(Y\) and a linear operator
\(A:X\to Y\), the adjoint operator \(A^*:Y\to X\) is defined by
\[
    \langle Ax,y\rangle_Y
    =
    \langle x,A^*y\rangle_X,
    \qquad
    x\in X,\ y\in Y.
\]

We denote by $\mathrm{Sym}_r$ the space of real symmetric $r\times r$ matrices. For a symmetric matrix \(A=(a_{ij})\in\mathrm{Sym}_r\), we denote by
\[
    \operatorname{vech}(A)\in\mathbb R^{r(r+1)/2}
\]
the half-vectorization of \(A\), obtained by stacking the entries on and
below the main diagonal columnwise:
\[
    \operatorname{vech}(A)
    :=
    \begin{bmatrix}
        a_{11} &
        a_{21} &
        \cdots &
        a_{r1} &
        a_{22} &
        a_{32} &
        \cdots &
        a_{rr}
    \end{bmatrix}^{\top}.
\]

\section{Preliminaries} \label{sec:Prelim}

In this section, we formulate the controlled quantum dynamics and introduce the linearized dynamics along a nominal trajectory. 
Based on this linearized system, we define the controllability Gramian that will be used to construct the controllability allocation score.

\subsection{Quantum System Dynamics}
In this subsection, we formulate a controlled quantum input-output system and derive its linearization along a reference trajectory.

We first introduce the Schrödinger dynamics using normalized
state-vector representatives. Let
\(\psi(t)\in\mathcal S_{\CC^d}\) satisfy
\begin{equation}
    \mathrm{i}\frac{\rd}{\rd t}\psi(t)
    =
    H(t)\psi(t),
    \qquad
    \psi(0)=\psi_0\in\mathcal S_{\CC^d},
    \label{eq:schrodinger}
\end{equation}
where we set \(\hbar=1\).

We next specify the dependence of the Hamiltonian on the control
inputs. The Hamiltonian is decomposed into a drift term \(H_0\) and
control terms as
\begin{equation}
    H(t)
    =
    H_0+\sum_{j=1}^{m}u_j(t)H_j,
    \label{eq:hamiltonian}
\end{equation}
where \(H_0,H_j\in\CC^{d\times d}\) are Hermitian matrices and \(u_j(t)\in\RR\) are real-valued control inputs.
Throughout this paper, we assume that $u\in L^2([0,T];\mathbb R^m)$.
Under this assumption, the Schrödinger equation admits a unique absolutely continuous solution and a unitary propagator $\Phi_u(t,s)$, ensuring norm preservation. 
The proof is provided in Appendix \ref{proof:well-posed-AC}

We then introduce physical outputs given by expectation values of prescribed observables. In many quantum applications, the quantities of interest are not the full quantum state but a finite collection of measurable quantities. Expectation values of selected observables provide a standard class of such outputs
\cite{zhang2017observation}.

Let \(O_1,\ldots,O_r\in\CC^{d\times d}\) be Hermitian observables. We
define the physical output map $    \mathcal O:\mathcal M\to\RR^r$ by
\[
    \mathcal O([\psi])
    :=
    \begin{bmatrix}
        \psi^\dagger O_1\psi\\
        \vdots\\
        \psi^\dagger O_r\psi
    \end{bmatrix}.
\]
This map is well-defined on \(\mathcal M\), since
\[
    (e^{\mathrm{i}\theta}\psi)^\dagger
    O_k
    (e^{\mathrm{i}\theta}\psi)
    =
    \psi^\dagger O_k\psi,
    \qquad
    \theta\in\RR,
    \quad
    k=1,\ldots,r.
\]

For the variational analysis, we use the corresponding representation
on normalized state vectors,
\[
    \widetilde{\mathcal O}:
    \mathcal S_{\CC^d}\to\RR^r,
    \qquad
    \widetilde{\mathcal O}(\psi)
    :=
    \mathcal O([\psi]).
\]
By construction,
\begin{equation}
    \widetilde{\mathcal O}(e^{\mathrm{i}\theta}\psi)
    =
    \widetilde{\mathcal O}(\psi),
    \qquad
    \theta\in\RR.
    \label{eq:output-invariande}
\end{equation}

Combining the Schrödinger dynamics with the physical state and output
maps, the nonlinear quantum input-output system considered in this
paper is
\begin{equation}
    \begin{aligned}
        \frac{\rd}{\rd t}\psi(t)
        &=
        -\mathrm{i}
        \left(
            H_0+\sum_{j=1}^{m}u_j(t)H_j
        \right)\psi(t),
        \\
        x(t)
        &=
        [\psi(t)]\in\mathcal M,
        \\
        y(t)
        &=
        \mathcal O(x(t))
        =
        \widetilde{\mathcal O}(\psi(t)).
    \end{aligned}
    \label{eq:main_dynamics}
\end{equation}
Here, \(\psi(t)\in\mathcal S_{\CC^d}\) is a normalized representative
used to describe the Schrödinger dynamics, whereas
\(x(t)=[\psi(t)]\) is the corresponding physical pure state.

We next introduce the linearized system. We fix a reference control and define the corresponding nominal trajectory. Let \(\bar u(\cdot)\) be a prescribed control with components \(\bar u_1(t),\ldots,\bar u_m(t)\). The corresponding normalized state-vector trajectory \(\bar\psi(t)\) satisfies
\begin{equation}
    \frac{\rd}{\rd t}\bar\psi(t)
    =
    -\mathrm{i}
    \left(
        H_0+\sum_{j=1}^{m}\bar u_j(t)H_j
    \right)
    \bar\psi(t).
    \label{eq:nominal_dynamics}
\end{equation}
We refer to \(\bar\psi(\cdot)\) as the nominal state-vector trajectory.
The associated physical state and output are
\[
    \bar x(t)
    :=
    [\bar\psi(t)]\in\mathcal M,
    \qquad
    \bar y(t)
    :=
    \mathcal O(\bar x(t))
    =
    \tilde{\mathcal O}(\bar\psi(t)).
\]

We finally derive the first-order input-output response around the
reference control. Let
\[
    \delta u(t)
    =
    \bigl(
        \delta u_1(t),\ldots,\delta u_m(t)
    \bigr)
\]
be a small control variation, and let
\(\delta\psi(t)\) and \(\delta y(t)\) denote the corresponding
first-order state-vector and output variations. Since the initial
state is fixed, we impose
\[
    \delta\psi(0)=0.
\]
The linearized input-output system is
\begin{equation}
    \begin{aligned}
        \dot{\delta\psi}(t)
        &=
        -\mathrm{i}
        \left(
            H_0+\sum_{k=1}^{m}\bar u_k(t)H_k
        \right)\delta\psi(t)
        +
        \sum_{j=1}^{m}
        \delta u_j(t)(-\mathrm{i}H_j)\bar\psi(t),
        \\
        \delta y(t)
        &=
        C(t)[\delta\psi(t)],
    \end{aligned}
    \label{eq:linearized_dynamics}
\end{equation}
where
\[
    C(t)
    :=
    D\widetilde{\mathcal O}_{\bar\psi(t)}
    :
    T_{\bar\psi(t)}\mathcal S_{\CC^d}
    \to\RR^r
\]
is the differential of the lifted output map along the nominal  trajectory.
For the expectation-value output considered in this paper,
the differential $C(t)$ admits the explicit representation
\begin{align}
    [C(t)[v]]_\ell
    &=
    D\widetilde{\mathcal O}_{\ell,\bar\psi(t)}[v]=
    v^\dagger O_\ell \bar\psi(t)
    +
    \bar\psi(t)^\dagger O_\ell v=
    2\operatorname{Re}
    \left\langle
        O_\ell\bar\psi(t),\,v
    \right\rangle,
    \qquad
    \ell=1,\ldots,r,
    \label{eq:explicit_output_differential}
\end{align}
for $v\in T_{\bar\psi(t)}\mathcal S_{\CC^d}$, where the last equality
follows from the Hermiticity of $O_\ell$.
Equivalently,
\begin{align}
    C(t)[v]
    =
    2
    \begin{bmatrix}
        \operatorname{Re}
        \langle O_1\bar\psi(t),v\rangle
        \\
        \vdots
        \\
        \operatorname{Re}
        \langle O_r\bar\psi(t),v\rangle
    \end{bmatrix}.
\end{align}

Since the quantum state is defined as a quotient space in \eqref{eq:quantum-state-quotient}, a physical output must be independent of the choice of representative. Accordingly, the lifted output map \(\widetilde{\mathcal O}\) satisfies \eqref{eq:output-invariande}
and hence
\[
    D\widetilde{\mathcal O}_{\psi}[i\psi]=0.
\]
Therefore, the global-phase direction does not contribute to the terminal-output Gramian.

\subsection{Output Controllability Gramian}
\label{subsec:output-gramian}

In this subsection, we define the output controllability Gramian from the terminal output variations generated by the linearized input-output system.

We first introduce the state-transition operator associated with the nominal dynamics. Let \(\Phi_{\bar u}(t,s)\) be defined by
\[
    \frac{\partial}{\partial t}\Phi_{\bar u}(t,s)
    =
    -\mathrm{i}
    \left(
        H_0+\sum_{k=1}^{m}\bar u_k(t)H_k
    \right)
    \Phi_{\bar u}(t,s),
    \qquad
    \Phi_{\bar u}(s,s)=I_d.
\]
Then the nominal trajectory $\bar{\psi}$ satisfying \eqref{eq:nominal_dynamics} satisfies
\[
    \bar\psi(t)
    =
    \Phi_{\bar u}(t,s)\bar\psi(s).
\]

We next compute the terminal state-vector variation. Since the initial
state is fixed, we impose \(\delta\psi(0)=0\). By the
variation-of-constants formula, the solution of
\eqref{eq:linearized_dynamics} at time \(T\) is
\begin{equation}
    \delta\psi(T)
    =
    \sum_{j=1}^{m}
    \int_0^T
    \Phi_{\bar u}(T,t)
    (-\mathrm{i}H_j)
    \bar\psi(t)
    \delta u_j(t)\,\rd t.
    \label{eq:linearized-solution}
\end{equation}

We then obtain the corresponding terminal output variation. 
Applying the output differential
\[
    C(T)
    =
    D\tilde{\mathcal O}_{\bar\psi(T)}
\]
to \eqref{eq:linearized-solution} gives
\begin{equation}
    \delta y(T)
    =
    \sum_{j=1}^{m}
    \int_0^T
    C(T)
    \left[
        \Phi_{\bar u}(T,t)
        (-\mathrm{i}H_j)
        \bar\psi(t)
    \right]
    \delta u_j(t)\,\rd t.
    \label{eq:terminal-output-variation}
\end{equation}

This formula defines a linear operator from input variations to terminal output variations:
\begin{equation}
    J_y:
    L^2([0,T];\RR^m)
    \longrightarrow
    \RR^r,
    \qquad
    J_y[\delta u]
    :=
    \delta y(T).
    \label{eq:output-endpoint-operator}
\end{equation}
In other words, \(J_y\) maps an infinitesimal input variation
\(\delta u(\cdot)\) to the corresponding first-order output variation
at time \(T\).

We define the reachable subspace of terminal output variations by
\[
    \mathcal R_y(T)
    :=
    \left\{
        J_y[\delta u]
        \ \middle|\
        \delta u\in L^2([0,T];\RR^m)
    \right\}
    \subseteq\RR^r.
\]
The linearized input-output system is said to be output controllable
at time \(T\) along the nominal trajectory if
\[
    \mathcal R_y(T)=\RR^r.
\]

We now define the output controllability Gramian by
\begin{equation}
    W(T)
    :=
    J_yJ_y^*
    :
    \RR^r
    \longrightarrow
    \RR^r,
    \label{eq:output-gramian-operator}
\end{equation}
where \(J_y^*\) denotes the adjoint of \(J_y\) with respect to the
standard Euclidean inner product on \(\RR^r\) and the standard
\(L^2\) inner product on \(L^2([0,T];\RR^m)\).

The output controllability Gramian admits the explicit representation
\begin{equation}
\begin{aligned}
    W(T)
    =
    \sum_{j=1}^{m}
    \int_0^T
    &C(T)
    \left[
        \Phi_{\bar u}(T,t)
        (-\mathrm{i}H_j)
        \bar\psi(t)
    \right]
    \left( C(T) 
    \left[
        \Phi_{\bar u}(T,t)
        (-\mathrm{i}H_j)
        \bar\psi(t)
    \right]\right)^\top
    \,\rd t.
\end{aligned}
\label{eq:output-gramian}
\end{equation}
We call $W(T)$ the output controllability Gramian along the nominal trajectory $\bar\psi(\cdot)$. 
When $W(T)$ is positive definite, the minimum input energy required to realize a terminal-output variation $z\in\mathbb R^r$ is given by
\[
z^\top W(T)^{-1}z=\|\widehat{\delta u}\|_{L^2}^2,
\]
where $\widehat{\delta u}$ denotes the minimum-energy input. 
Thus, the inverse Gramian characterizes how difficult it is to realize variations in different output directions.

The following proposition characterizes output controllability in terms of the output Gramian.

\begin{proposition}
\label{prop:image-output-gramian}
The image of the output controllability Gramian coincides with the reachable subspace of terminal output variations:
\begin{equation}
    \operatorname{Im}W(T)
    =
    \mathcal R_y(T).
\end{equation}
Consequently, the following statements are equivalent:
\begin{enumerate}
    \item $W(T)$ is positive definite.
    \item $J_y$ is surjective onto $\mathbb R^r$.
    \item The linearized input-output system is output controllable at
    time $T$ along the nominal trajectory.
\end{enumerate}
\end{proposition}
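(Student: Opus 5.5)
The plan is the standard finite-dimensional range argument for $W(T)=J_yJ_y^*$, using that the codomain $\RR^r$ is finite-dimensional, so every subspace is closed.

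\textbf{Boundedness of $J_y$.} First we check that $J_y$ is a bounded linear operator from $L^2([0,T];\RR^m)$ to $\RR^r$. By \eqref{eq:terminal-output-variation}, $J_y[\delta u]=\int_0^T K(t)\delta u(t)\,\rd t$. The kernel $K(t)\in\RR^{r\times m}$ has $j$-th column $C(T)[\Phi_{\bar u}(T,t)(-\mathrm{i}H_j)\bar\psi(t)]$, and by \eqref{eq:explicit_output_differential} its entries are $2\operatorname{Re}\langle O_\ell\bar\psi(T),\Phi_{\bar u}(T,t)(-\mathrm{i}H_j)\bar\psi(t)\rangle$. Unitarity of the propagator (Appendix~\ref{proof:well-posed-AC}) and $\|\bar\psi(t)\|=1$ show that $K$ is bounded and measurable (indeed continuous in $t$). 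Cauchy--Schwarz then gives boundedness of $J_y$. Consequently $J_y^*z=K(\cdot)^\top z$, and $W(T)=\int_0^T K K^\top\,\rd t$, which is consistent with \eqref{eq:output-gramian}.

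\textbf{The identity $\operatorname{Im}W(T)=\mathcal R_y(T)$.}
\begin{enumerate}
    \item The inclusion $\operatorname{Im}W(T)=J_y(\operatorname{Im}J_y^*)\subseteq\operatorname{Im}J_y=\mathcal R_y(T)$ is immediate.
    \item For the reverse inclusion, we pass to orthogonal complements in $\RR^r$. Since $W(T)$ is symmetric, $(\operatorname{Im}W)^\perp=\ker W$.
    \item If $W z=0$, then $0=z^\top W z=\|J_y^*z\|_{L^2}^2$, so $J_y^*z=0$. Hence for all $\delta u$ we have $\langle z,J_y\delta u\rangle=\langle J_y^*z,\delta u\rangle_{L^2}=0$, i.e. $z\perp\mathcal R_y(T)$.
    \item Thus $\ker W\subseteq \mathcal R_y(T)^\perp$. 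Taking complements, and using that both subspaces are finite-dimensional and hence closed, gives $\mathcal R_y(T)\subseteq\operatorname{Im}W$.
\end{enumerate}

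\textbf{The equivalences.}
\begin{itemize}
    \item $W(T)$ is symmetric positive semidefinite, since $z^\top Wz=\|J_y^*z\|^2\ge0$. It is therefore positive definite if and only if it is invertible, i.e. if and only if $\operatorname{Im}W=\RR^r$.
    \item By the identity above, $\operatorname{Im}W=\RR^r$ holds if and only if $\mathcal R_y(T)=\RR^r$. This is exactly surjectivity of $J_y$, and by definition it is output controllability at time $T$.
\end{itemize}

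The only step requiring any care is the first one: justifying that $J_y$ is bounded and that its adjoint is given pointwise by $K(t)^\top$. This needs measurability and boundedness of $t\mapsto\Phi_{\bar u}(T,t)$ for $\bar u\in L^2$, which I take from the well-posedness result in the appendix. Everything afterwards is linear algebra.
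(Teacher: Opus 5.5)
Your proof is correct and follows essentially the same route as the paper: bound $J_y$ via unitarity of the propagator and Cauchy--Schwarz, use $z^\top W(T)z=\|J_y^*z\|_{L^2}^2$ to get $W(T)\succeq0$ and to relate $\ker W(T)$ to $\ker J_y^*$, then pass to orthogonal complements in the finite-dimensional space $\RR^r$. The only difference is cosmetic: the paper cites the range--kernel relation $\overline{\operatorname{Im}J_y}=(\ker J_y^*)^\perp$, whereas you verify the needed inclusion directly and pair it with the trivial inclusion $\operatorname{Im}W(T)\subseteq\operatorname{Im}J_y$.
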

\begin{proof}
    See Appendix \ref{proof:image-output-gramian}.
\end{proof}
Proposition \ref{prop:image-output-gramian} shows that the reachable subspace of terminal-output variations can be characterized by the output Gramian. 
In particular, whether arbitrary first-order terminal-output variations are achievable can be determined simply by checking the invertibility of $W(T)$.
\subsection{Minimum-Energy Property of the Output Gramian}
\label{subsec:minimum-energy-output}

In this subsection, we show that the output controllability Gramian
characterizes the minimum input energy required to generate a
prescribed terminal output variation.

For a reachable terminal output variation
\[
    z\in\mathcal R_y(T)=\operatorname{Im}J_y,
\]
define the minimum input energy by
\begin{equation}
    E_{\min}^{y}(z)
    :=
    \inf
    \left\{
        \|\delta u\|_{L^2}^2
        \ \middle|\
        J_y[\delta u]=z
    \right\}.
    \label{eq:minimum-output-energy}
\end{equation}

The following proposition gives the minimum-energy input in terms of the
output controllability Gramian.

\begin{proposition}[Minimum-energy property]\label{thm:minimum-output-energy}
Let $z\in\mathcal R_y(T)$. Then the unique minimum-norm input variation satisfying
\begin{equation}
    J_y[\delta u]=z
\end{equation}
is
\begin{equation}
    \delta u^\star
    =
    J_y^*W(T)^+z,
\end{equation}
where $W(T)^+$ denotes the Moore--Penrose pseudoinverse of $W(T)$.
Moreover, the minimum energy is
\begin{equation}
    E_{\min}^y(z)
    =
    z^\top W(T)^+z.
    \label{eq:minimum-energy-property}
\end{equation}
\end{proposition}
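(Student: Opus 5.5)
The plan is the standard Hilbert-space least-norm argument. Throughout, $J_y$ is treated as a bounded linear operator from the Hilbert space $L^2([0,T];\RR^m)$ into the finite-dimensional space $\RR^r$.

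\textbf{Step 1: boundedness and the adjoint.} By \eqref{eq:terminal-output-variation}, the integrand is bounded: $\Phi_{\bar u}(T,t)$ is unitary, $\bar\psi(t)$ has unit norm, and $C(T)$ is a fixed linear map. Hence $J_y$ is bounded, and $J_y^*$ exists, given explicitly by
\[
(J_y^*w)_j(t)=w^\top C(T)\bigl[\Phi_{\bar u}(T,t)(-\mathrm{i}H_j)\bar\psi(t)\bigr].
\]

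\textbf{Step 2: feasibility of the candidate.} Since $\operatorname{Im}J_y$ is finite-dimensional, it is closed. Proposition~\ref{prop:image-output-gramian} gives $\operatorname{Im}W(T)=\mathcal R_y(T)$, so $z\in\operatorname{Im}W(T)$. Because $W(T)$ is symmetric, $W(T)W(T)^+$ is the orthogonal projector onto $\operatorname{Im}W(T)$. Therefore
\[
J_y\delta u^\star=J_yJ_y^*W(T)^+z=W(T)W(T)^+z=z,
\]
so $\delta u^\star$ is feasible.

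\textbf{Step 3: optimality and uniqueness.} Let $\delta u$ be any feasible input, and write $\delta u=\delta u^\star+v$. Then $J_yv=0$, so $v\in\ker J_y=(\operatorname{Im}J_y^*)^\perp$. Since $\delta u^\star\in\operatorname{Im}J_y^*$, we get $\langle\delta u^\star,v\rangle_{L^2}=0$. The Pythagorean identity then gives
\[
\|\delta u\|_{L^2}^2=\|\delta u^\star\|_{L^2}^2+\|v\|_{L^2}^2 .
\]
This shows that the infimum in \eqref{eq:minimum-output-energy} is attained at $\delta u^\star$. The minimizer is unique, since equality forces $v=0$.

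\textbf{Step 4: the energy formula.} We compute
\[
\|\delta u^\star\|^2=\langle J_y^*W^+z,J_y^*W^+z\rangle=z^\top W^+ J_yJ_y^* W^+z=z^\top W^+WW^+z=z^\top W^+z,
\]
using the symmetry of $W^+$ and the Moore--Penrose identity $W^+WW^+=W^+$.

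\textbf{Expected obstacle.} The only delicate point is Step 2: justifying $WW^+z=z$. This requires $z\in\operatorname{Im}W(T)$, which is exactly where Proposition~\ref{prop:image-output-gramian} is needed, together with the fact that $\ker J_y$ is closed (true because $J_y$ is bounded). The remaining steps are routine.
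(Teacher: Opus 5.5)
Your proof is correct, but it reaches the conclusion by a different and more self-contained route than the paper.

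\textbf{The paper's route.} It argues by citation. After noting that $z\in\operatorname{Im}J_y$ and that $\operatorname{Im}J_y$ is closed, it invokes the minimum-norm theorem for bounded operators with closed range to identify the minimizer as $J_y^\dagger z$. It then applies the operator identity $J_y^\dagger=J_y^*(J_yJ_y^*)^\dagger$ to rewrite this as $J_y^*W(T)^+z$. The relation $W(T)W(T)^+z=z$ appears only afterwards, in the energy evaluation $\|\delta u^\star\|_{L^2}^2=\langle W(T)W(T)^+z,W(T)^+z\rangle$.

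\textbf{Your route.} You verify each piece directly:
\begin{enumerate}
\item $W(T)W(T)^+z=z$ gives feasibility of the candidate.
\item The elementary identity $\ker J_y=(\operatorname{Im}J_y^*)^\perp$, together with Pythagoras, gives optimality and uniqueness. This is in effect the proof of the minimum-norm theorem, specialized to this operator.
\item $W^+WW^+=W^+$, with $W^+$ symmetric, gives the energy formula without reusing $z\in\operatorname{Im}W(T)$.
\end{enumerate}

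\textbf{Comparison.} Your argument never needs the Moore--Penrose pseudoinverse of an operator on the infinite-dimensional space $L^2([0,T];\RR^m)$, only the matrix pseudoinverse $W(T)^+$. In fact, closedness of $\operatorname{Im}J_y$ or of $\ker J_y$ is not used anywhere in your Steps 2--4, because the orthogonality $\ker J_y\perp\operatorname{Im}J_y^*$ holds for any bounded operator. So the closedness remarks in Step 2 and in your ``expected obstacle'' paragraph are harmless but superfluous. The paper's route is shorter and places $\delta u^\star$ within the general pseudoinverse framework. The price is that it relies on cited results whose closed-range hypothesis must be checked, which the paper does.
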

\begin{proof}
    See Appendix \ref{app:min-output-energy-proof}.
\end{proof}

\section{Controllability Allocation Scoring}\label{sec:quantum-controllability}
\subsection{Formulation of the Controllability Allocation Scoring Problem}

We now formulate controllability allocation scoring for local Hamiltonian perturbations along the natural evolution of the quantum system. 
The construction follows the controllability allocation scoring framework of \cite{sato2025target}, while the Gramian contributions are defined here from the terminal-output responses of the quantum system introduced in the preceding section.

In this section, we consider the uncontrolled Schr\"odinger dynamics
\begin{equation}\label{target}
    \frac{\rd}{\rd t}\psi(t)
    =
    -\mathrm{i}H_0\psi(t),
\end{equation}
which defines the natural evolution of the system. 
To formulate the allocation of virtual input intensities among local perturbation directions, we introduce virtual control Hamiltonians 
$\{H_j\}_{j=1}^{m}$ and consider the augmented system
\begin{equation}\label{eq:virtual_input}
    \frac{\rd}{\rd t}\psi(t)
    =
    -\mathrm{i}
    \left(
        H_0
        +
        \sum_{j=1}^{m}\sqrt{p_j}\,u_j(t)H_j
    \right)
    \psi(t),
\end{equation}
where $p_j \geq 0$ represents the relative virtual-input intensity allocation assigned to the $j$-th virtual perturbation direction.
The inputs $u_j$ are introduced only to probe the local response of the system, while the nominal trajectory used to construct the score remains the uncontrolled evolution \eqref{target}.

The objective is to determine the weight vector $p$ so as to maximize the first-order terminal-output controllability of the augmented system along the uncontrolled trajectory.
In this setting, larger controllability corresponds to a larger first-order reachable set in the prescribed output space under a common input-energy budget.
After optimization, the resulting weights specify an optimal allocation of virtual-input intensities among the perturbation directions for enlarging this reachable set.
Thus, a larger optimal weight $\hat{p}_j$ indicates that the corresponding Hamiltonian direction $H_j$ share of the jointly optimized allocation, rather than measuring its influence in isolation.

To determine the optimal weights, we next construct the corresponding weighted output controllability Gramian and introduce the associated log-determinant criterion.
\begin{remark}[Physical scaling of the virtual control directions]
    In the augmented system~\eqref{eq:virtual_input}, each \(H_j\) is interpreted as the Hamiltonian generated by a unit input through channel \(j\), and the input space is equipped with the common \(L^2\) energy metric. 
    Therefore, the magnitude of \(H_j\) is reflected in the optimal allocation. 
    If one channel generates a stronger Hamiltonian under the same input-energy budget, its Gramian contribution is correspondingly larger, and the channel may receive a larger allocation weight.
    This dependence is intentional: the proposed score measures effective dynamical influence per unit input energy, rather than an intrinsic structural property of a site independent of the available actuation.
    A rescaling of \(H_j\) that represents only a change of parametrization must be accompanied by the inverse rescaling of the corresponding input variable and its energy metric.
\end{remark}
\begin{remark}
    Although the definition allows general virtual Hamiltonian directions $H_j$, the interpretation of the resulting allocation score as a local dynamical centrality is clearest when the virtual Hamiltonians are local.
    Moreover, local Hamiltonian models and locally accessible operations are standard settings in quantum simulation and quantum control \cite{georgescu2014quantum,koch2022quantum}. In addition, controllability of quantum networks through the manipulation of a local Hamiltonian on a small subsystem has been studied as a standard local-control setting \cite{burgarth2009local}. 
    Therefore, in the following, we assume that the virtual Hamiltonian directions $H_j$ are local operators to clarify the interpretation of the controllability allocation score as a local dynamical centrality.
\end{remark}

Throughout this paper, we assume the following.
\begin{assumption}
\label{assump:local-control-hamiltonian}
 Each virtual input Hamiltonian \(H_j\) is a local operator acting nontrivially only on a single site. 
\end{assumption}

\subsection{Metric of Controllability}
To evaluate the controllability of the system \eqref{eq:virtual_input}, we consider the output controllability Gramian along the nominal trajectory $\bar u = 0$. 
Here, $\bar u = 0$ corresponds to the unperturbed evolution of the original system \eqref{target}, and the virtual inputs $u_j$ are introduced only to probe the local response of the dynamics to perturbations along the directions defined by $H_j$.

Based on this geometric interpretation, we adopt the following metric.
In the preceding discussion, we wrote the Gramian simply as $W(T)$.  In the following sections, however, we study its dependence on both the weight vector $p$ and the terminal time $T$.  Therefore, from now on, we write the output controllability Gramian as $W(p,T)$.  Here $W(p,T)$ is explicitly written as
\begin{equation}
    \begin{split}
        W(p,T) &=\sum_{j=1}^m\int_0^T C(T)\left[ \Phi (T,t) (-\mathrm{i}\sqrt{p}_jH_j)\bar \psi(t) \right] \left(C(T)\left[ \Phi (T,t) (-\mathrm{i}\sqrt{p}_jH_j)\bar \psi(t) \right]\right)^\top \rd t\\
        &=\sum_{j=1}^m p_j W_j(T)
    \end{split}
\end{equation}
where $\Phi$ is the state-transition operator associated with nominal dynamics
\begin{gather}
    \frac{\partial}{\partial t}\Phi(t,s) = -\mathrm{i} H_0 \Phi(t,s), \ \Phi(s,s)=I.
\end{gather}
and \(W_j(T)\) denotes the contribution of the \(j\)-th virtual Hamiltonian direction to the output controllability Gramian.
\begin{equation}
        \begin{split}
     W_j(T)= \int_0^T
        C(T)\left[\Phi(T,t)(-\mathrm{i}H_j)\bar{\psi}(t)\right]\left(C(T)\left[\Phi(T,t)(-\mathrm{i}H_j)\bar{\psi}(t)\right]\right)^\top \rd t   
        \end{split}
\end{equation}
For notational convenience, define the output sensitivity associated with the
$j$-th virtual Hamiltonian direction by
\begin{equation}
    y_j(T,t)
    :=
    C(T)\left[
        \Phi(T,t)(-\mathrm{i}H_j)\bar{\psi}(t)
    \right]
    \in\mathbb{R}^r .
    \label{eq:output_sensitivity}
\end{equation}

Using the explicit form of the output differential for the
expectation-value outputs derived in Section~\ref{sec:Prelim}, the $\ell$-th component of
$y_j(T,t)$ is
\begin{equation}
\begin{aligned}
    [y_j(T,t)]_\ell
    &=
    2\operatorname{Re}
    \left\langle
        O_\ell\bar{\psi}(T),
        \Phi(T,t)(-\mathrm{i}H_j)\bar{\psi}(t)
    \right\rangle
    \\
    &=
    2\operatorname{Re}
    \left\langle
        \Phi(T,t)^*O_\ell\bar{\psi}(T),
        (-\mathrm{i}H_j)\bar{\psi}(t)
    \right\rangle,
    \qquad
    \ell=1,\ldots,r .
\end{aligned}
\label{eq:output_sensitivity_adjoint}
\end{equation}

This suggests introducing the adjoint state associated with the
$\ell$-th output observable,
\begin{equation}
    z_\ell(T,t)
    :=
    \Phi(T,t)^*O_\ell\bar{\psi}(T),
    \qquad
    \ell=1,\ldots,r .
    \label{eq:adjoint_state}
\end{equation}
Then, the channel-wise output sensitivity admits the representation
\begin{equation}
    [y_j(T,t)]_\ell
    =
    2\operatorname{Re}
    \left\langle
        z_\ell(T,t),
        (-\mathrm{i}H_j)\bar{\psi}(t)
    \right\rangle .
    \label{eq:adjoint_output_sensitivity}
\end{equation}

Accordingly, the channel-wise Gramian contributions can be evaluated from the nominal trajectory, the adjoint states $z_\ell(T,t)$, and their inner products with the virtual Hamiltonian directions, without explicitly constructing the state-transition operator. Section~\ref{sec:MPS} exploits this representation for an MPS/TEBD implementation in quantum many-body systems.

\subsection{Mathematical analysis}

Following the controllability allocation scoring framework of \cite{sato2025target}, we formulate the corresponding allocation problem for the quantum output Gramian introduced above.
The optimization problem is given by

\begin{equation}
\begin{aligned}
\min_{p \in \mathbb{R}^m} \quad & f(p) \\
\text{s.t.} \quad
& p \in \Delta, \\
& W(p,T) \succ 0
\end{aligned}
\tag{P1}
\label{opt:controllability}
\end{equation}
where
\begin{equation}
    f(p) = -\log \det \left(  W(p, T) \right),
\end{equation}
\[
\Delta
\coloneqq
\left\{
p\in\mathbb{R}^m
\;\middle|\;
\sum_{j=1}^m p_j = 1,\;
p_j \geq 0,\quad j=1,\ldots,m
\right\}
\]
is the probability simplex. 
An optimal solution \(\widehat p\) of Problem (P1) is referred to as the volumetric controllability allocation score (V-CAS) for the quantum system, following the controllability scoring framework developed for linear systems \cite{sato2024controllability}. 
Unlike the original linear time-invariant system setting, the present score is constructed from the linearized input-output dynamics of the bilinear Schrödinger system along the nominal trajectory.
We adopt the negative log-determinant criterion to evaluate the collective influence of local Hamiltonian perturbations on the selected observables. 
The output Gramian $W(p,T)$ naturally arises from the linearized input-to-output map and characterizes the output variations achievable under a fixed input-energy budget. 
Since our aim is to evaluate the ability to generate variations across multiple output directions, rather than a large response concentrated in a particular direction, we maximize the volume of the first-order reachable set through $\log\det W(p,T)$. 
Moreover, unlike trace-based criteria such as $\operatorname{tr}(W^{-1})$, the resulting optimal weights are invariant under invertible linear transformations of the output coordinates\cite{sato2026relationship}, making the evaluation independent of the choice of units or basis for the observables. 
The geometric interpretation of the V-CAS is illustrated in Fig.~\ref{fig:controllability_allocation}.

\begin{figure}[t]
    \centering
    \includegraphics[width=\linewidth]{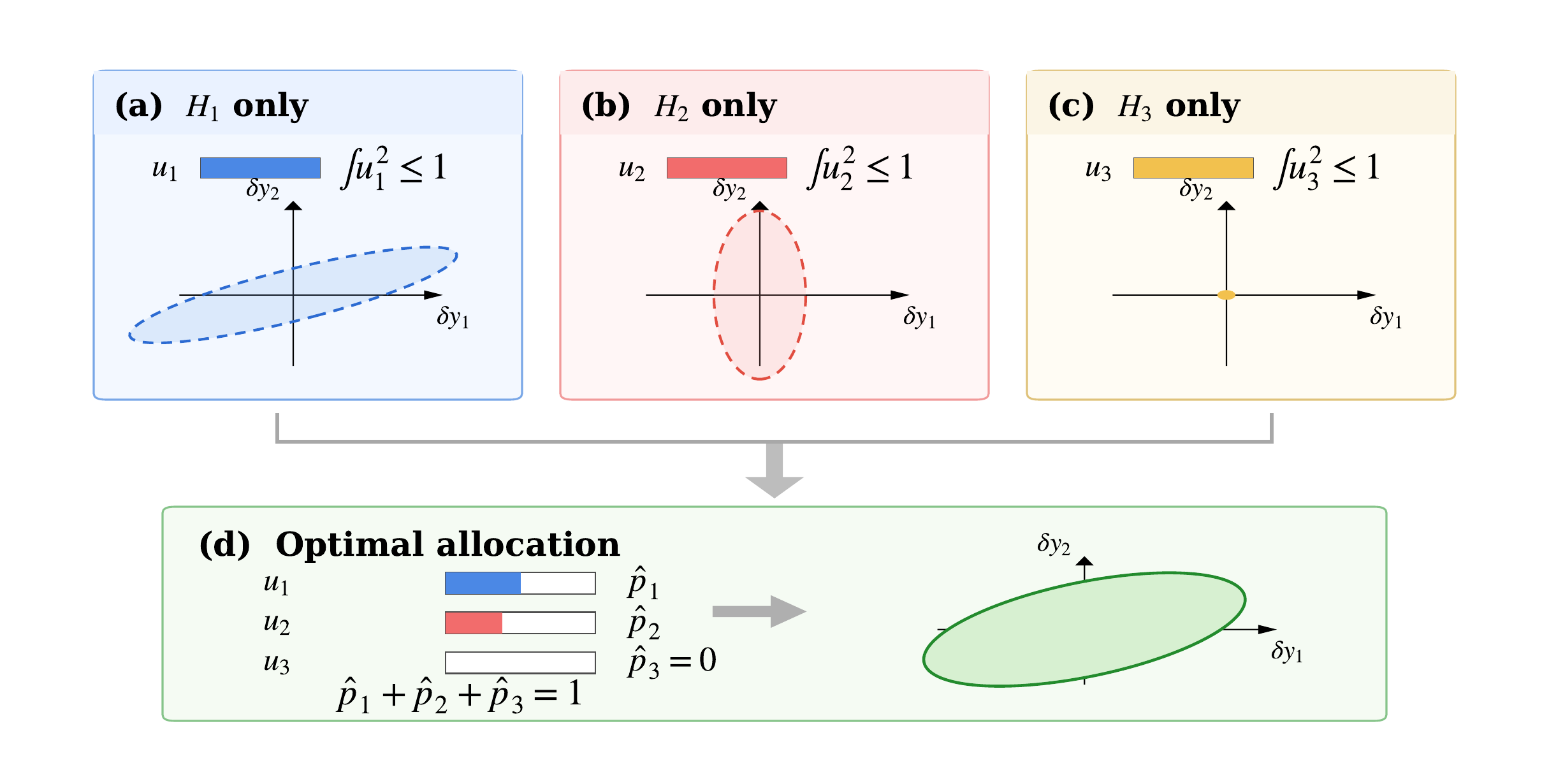}
    \caption{
        Schematic illustration of controllability allocation
        scoring. Panels (a)--(c) show the first-order reachable
        sets in the terminal-output space under a unit
        input-energy budget for the virtual Hamiltonians
        $H_1$, $H_2$, and $H_3$, respectively.
        While $H_1$ and $H_2$ generate output variations
        in different directions, $H_3$ has no influence
        on the selected outputs.
        Panel (d) illustrates the optimal allocation
        $\widehat{p}$, which maximizes the volume of
        the reachable set.
    }
    \label{fig:controllability_allocation}
\end{figure}

For each fixed \(T>0\), Problem \eqref{opt:controllability} has the affine positive-semidefinite Gramian structure studied in \cite{sato2026projected}. Define
\begin{equation}
    \mathcal{W}_T(v):=\sum_{j=1}^{m} v_j W_j(T),
    \qquad
    \mathcal{X}_T:=\left\{
        p\in\mathbb{R}^m:\mathcal{W}_T(p)\succ0
    \right\}.
\end{equation}
Since \(W_j(T)\succeq0\), the feasibility characterization in
\cite[Lemma~3]{sato2026projected} gives
\begin{equation}
    \Delta\cap\mathcal{X}_T\neq\varnothing
    \quad\Longleftrightarrow\quad
    \sum_{j=1}^{m} W_j(T)\succ0.
\end{equation}
Whenever this condition holds, the uniform allocation
\begin{equation}
    p^{\mathrm{uni}}=\frac{1}{m}\mathbf{1}
\end{equation}
is feasible. Existence of an optimal solution to Problem~(P1) then
follows from \cite[Theorem~4]{sato2026projected}.

\subsubsection{Uniqueness}
The uniqueness of the optimal allocation is important for interpreting the V-CAS as a measure of dynamical influence. 
Although the optimal output controllability Gramian is uniquely determined, different allocations may yield the same optimal Gramian. 
Since the V-CAS is defined by the allocation weights themselves, such nonuniqueness may lead to different scores and rankings of local Hamiltonian directions, potentially affecting score-based control-channel selection. 
We therefore examine conditions that guarantee uniqueness of the V-CAS.
Writing \(f(p):=-\log\det W(p,T)\), the gradient formula in
\cite[Lemma~4]{sato2026projected} gives
\begin{equation}
    \frac{\partial f}{\partial p_j}(p)
    =
    -\operatorname{tr}\!\left(
        W(p,T)^{-1}W_j(T)
    \right),
    \qquad j=1,\ldots,m.
    \label{eq:gradient_logdet}
\end{equation}

For the full simplex, the feasible-direction space is
\begin{equation}
    \mathcal{L}_{\Delta}
    =
    \left\{
        v\in\mathbb{R}^m:
        \mathbf{1}^{\top}v=0
    \right\}.
\end{equation}
By \cite[Remark~3]{sato2026projected}, restricted injectivity,
\begin{equation}
    \ker\mathcal{W}_T\cap\mathcal{L}_{\Delta}
    =
    \{0\},
    \label{eq:restricted_injectivity}
\end{equation}
is equivalent to affine independence of
\(W_1(T),\ldots,W_m(T)\), or equivalently to linear independence of
\[
    W_1(T)-W_m(T),\ldots,W_{m-1}(T)-W_m(T).
\]
Whenever Problem~\eqref{opt:controllability} is feasible, this condition guarantees a unique optimal allocation by \cite[Theorem~5]{sato2026projected}.

Since the vector space of real symmetric \(r\times r\) matrices has dimension \(r(r+1)/2\), a necessary condition for affine independence is
\begin{equation}
    m-1\leq \frac{r(r+1)}{2}.
    \label{eq:dimension_condition}
\end{equation}

We next establish a sufficient condition for the affine independence of the Gramian contributions by adapting the argument of \cite[Theorem~2]{sato2025target}.    
\begin{proposition}\label{prop-unique}
For $j=1,\ldots,m$, define the initial output response
\[
q_j:=C(0)[-iH_j\bar{\psi}(0)]\in\mathbb{R}^r.
\]
Assume that the matrices
\[
q_1q_1^\top,\ldots,q_mq_m^\top
\]
are affinely independent.
Then the Gramian contributions $W_1(T),\ldots,W_m(T)$ are affinely independent for all sufficiently small $T>0$ and for almost every $T>0$.
\end{proposition}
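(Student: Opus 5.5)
The plan is to encode affine independence as the nonvanishing of a single scalar function of $T$. We show that this function is real-analytic in $T$, and then use the small-$T$ asymptotics of the Gramian contributions to show it is not identically zero. Concretely, set
\[
v_j(T):=\operatorname{vech}\bigl(W_j(T)-W_m(T)\bigr)\in\mathbb R^{r(r+1)/2},\qquad j=1,\ldots,m-1,
\]
and let $V(T)$ be the matrix with columns $v_1(T),\ldots,v_{m-1}(T)$. Define the Gram determinant $g(T):=\det\bigl(V(T)^\top V(T)\bigr)$. Then $W_1(T),\ldots,W_m(T)$ are affinely independent if and only if $g(T)\neq 0$.

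\textbf{Step 1 (analyticity).} Since the nominal control is $\bar u=0$, we have $\Phi(T,t)=e^{-\mathrm{i}H_0(T-t)}$ and $\bar\psi(t)=e^{-\mathrm{i}H_0t}\psi_0$. By \eqref{eq:adjoint_output_sensitivity},
\[
[y_j(T,t)]_\ell=2\operatorname{Re}\bigl\langle e^{\mathrm{i}H_0(T-t)}O_\ell e^{-\mathrm{i}H_0T}\psi_0,\,-\mathrm{i}H_je^{-\mathrm{i}H_0t}\psi_0\bigr\rangle .
\]
This is an entire, real-valued function of $(T,t)\in\mathbb R^2$; after diagonalizing $H_0$ it is a finite trigonometric polynomial. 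Hence
\[
W_j(T)=\int_0^T y_j(T,t)y_j(T,t)^\top\,\rd t
\]
is real-analytic on $\mathbb R$, and so is $g$.

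\textbf{Step 2 (small-$T$ asymptotics).} Because $y_j$ is continuous and $y_j(0,0)=C(0)[-\mathrm{i}H_j\bar\psi(0)]=q_j$, we have $\sup_{t\in[0,T]}\|y_j(T,t)-q_j\|\to 0$ as $T\to0$. Therefore
\[
W_j(T)=T\,q_jq_j^\top+o(T).
\]
Each $v_j(T)/T$ thus converges to $\operatorname{vech}(q_jq_j^\top-q_mq_m^\top)$. By multilinearity of the determinant,
\[
g(T)/T^{2(m-1)}\to \det\bigl(Q^\top Q\bigr),
\]
where $Q$ has columns $\operatorname{vech}(q_jq_j^\top-q_mq_m^\top)$. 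The hypothesis says these columns are linearly independent, so this limit is strictly positive. Consequently $g(T)>0$ for all sufficiently small $T>0$, which proves the first claim.

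\textbf{Step 3 (almost every $T$).} By Step 2, $g$ is a real-analytic function on $\mathbb R$ that is not identically zero. Its zero set is therefore discrete, hence countable and of Lebesgue measure zero. So $g(T)\neq0$, i.e.\ affine independence holds, for almost every $T>0$.

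\textbf{Main obstacle.} The only delicate point is Step 1: one must make sure that the $T$-dependence enters analytically through both the integrand and the upper limit, including the dependence of $C(T)$ on $\bar\psi(T)$. Writing everything explicitly via the exponentials $e^{\pm\mathrm{i}H_0 s}$, as above, settles this. It also makes clear why the resulting condition is independent of the drift: $H_0$ affects only the analytic continuation in $T$, and does not affect the leading-order term $T\,q_jq_j^\top$.
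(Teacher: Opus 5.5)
Your proof is correct and follows the same route as the paper, which defers to the argument of the cited reference: the short-time limit $W_j(T)/T\to q_jq_j^\top$ gives affine independence for small $T>0$, and real-analyticity in $T$ then excludes at most a discrete set of terminal times. Using the differences $W_j(T)-W_m(T)$ and a Gram determinant instead of the paper's orthonormal basis $U_{\Delta}$ of $\mathcal{L}_{\Delta}$ is an equivalent bookkeeping choice, and your write-up supplies the details the paper leaves to the reference, namely the explicit entire formula for $y_j(T,t)$ and the limit $g(T)/T^{2(m-1)}\to\det(Q^\top Q)>0$.
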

\begin{proof}
The proof is almost the same as \cite[Theorem~2]{sato2025target}.

Let $U_{\Delta}$ be a matrix whose columns form an orthonormal basis of $L_{\Delta}$.
By the assumed affine independence, the matrix
\[
K_{\mathrm{inst}}
=
\begin{bmatrix}
\operatorname{vech}(q_1q_1^\top) &
\cdots &
\operatorname{vech}(q_mq_m^\top)
\end{bmatrix}
U_{\Delta}
\]
has full column rank.

Since the channel-wise Gramians satisfy the same short-time limit and real-analyticity properties used in \cite[Theorem~2]{sato2025target}, the argument therein establishes affine independence of $W_1(T),\ldots,W_m(T)$ for all sufficiently small $T>0$ and almost every $T>0$. \hfill$\square$
\end{proof}
Consequently, for almost every $T>0$, the restricted-injectivity condition holds. Hence, whenever Problem \eqref{opt:controllability} is feasible, its optimal solution is unique by \cite[Theorem~5]{sato2026projected}.
\begin{remark}[Independence from the drift Hamiltonian]
The sufficient condition in Proposition \ref{prop-unique} depends only on the initial state \(\psi_0\),  the virtual control Hamiltonians \(H_1,\ldots,H_m\), and the differential of the output map at \(\psi_0\). In particular, it is independent of the drift
Hamiltonian \(H_0\). The exceptional set of terminal times may, however, depend on \(H_0\).
\end{remark}
\begin{remark}[Sufficiency of the initial-response condition]
The condition in Proposition~\ref{prop-unique} is sufficient but not necessary. For example, suppose that
\[
    \bar\psi(0)=\lvert 0\cdots 0\rangle,
    \qquad
    H_j=Z_j .
\]
Since \(\lvert 0\cdots 0\rangle\) is an eigenstate of every \(Z_j\), the
variation
\[
    -\mathrm{i}H_j\bar\psi(0)
\]
is only a global-phase direction. Consequently, for any phase-invariant
output map,
\[
    C(0)\!\left[-\mathrm{i}H_j\bar\psi(0)\right]=0,
    \qquad j=1,\ldots,m,
\]
and the sufficient condition in Proposition ~\ref{prop-unique} is not satisfied.

This failure, however, does not imply that the optimizer is nonunique. Under the drift dynamics generated by \(H_0\), the nominal state and the control sensitivities may evolve into observable directions. Therefore, the finite-time Gramian contributions \(W_j(T)\) may still become affinely independent for \(T>0\), even though their initial responses   vanish.
\end{remark}

\subsection{Numerical Solution of the Allocation Problem}

After computing the channel-wise Gramian contributions \(W_1(T),\ldots,W_m(T)\), we solve Problem \eqref{opt:controllability} using the domain-aware Armijo projected-gradient method of \cite[Algorithm~1]{sato2026projected}. Whenever Problem \eqref{opt:controllability} is feasible, the uniform allocation \(p^{\mathrm{uni}}=m^{-1}\mathbf{1}\) provides a feasible initialization, and convergence of the resulting iteration follows from \cite[Theorem~4]{sato2026projected}. The numerical construction of the Gramian contributions, including the MPS/TEBD implementation for many-body systems, is described in the next section.

\section{MPS/TEBD Implementation for Quantum Many-Body Systems}\label{sec:MPS}

Section~\ref{sec:quantum-controllability} reduced the computation of the volumetric controllability allocation score (V-CAS) to the construction of the channel-wise Gramian contributions $W_j(T)$.
In particular, the adjoint representation shows that these matrices can be evaluated from the nominal trajectory $\bar\psi(t)$, the adjoint states $z_\ell(T,t)$, and their inner products with the virtual Hamiltonian directions $(-\mathrm{i}H_j)\bar\psi(t)$, without explicitly constructing the state-transition operator.

Although each output Gramian contribution $W_j(T)$ is only an $r\times r$ matrix, the states appearing in this representation belong to the full many-body Hilbert space. For an $n$-qubit system, this space has dimension $2^n$, so direct propagation of the nominal and adjoint states becomes computationally prohibitive as the system size increases.

To address this computational difficulty, we exploit the tensor-product structure of one-dimensional local quantum many-body systems and represent the required forward and backward states by matrix product states (MPS). 
MPS includes the product-state ansatz commonly used in site-factorized mean-field approximations of lattice models, while also allowing entanglement between sites.
Combined with the time-evolving block decimation (TEBD) algorithm, this allows the adjoint representation derived in Section~\ref{sec:quantum-controllability} to be evaluated using local tensor updates and MPS contractions, without explicitly forming full many-body state vectors or state-transition matrices.

The assumptions introduced below are required only for the scalable
MPS/TEBD implementation in this section and are not needed for the
formulation and theoretical results in Sections~\ref{sec:Prelim} and~\ref{sec:quantum-controllability}.

We consider a one-dimensional chain of $n$ quantum sites and employ the
MPS representation for the corresponding many-body states
\cite{vidal2004efficient}. The sites are arranged in a linear order and
labeled by a single index $i=1,\ldots,n$. We assume that the Hamiltonian
is local with respect to this ordering, namely, that it consists of
on-site terms and interaction terms supported within a fixed finite
range along the chain. This locality structure is the computational
setting in which MPS-based time evolution can be efficiently applied.

Although this setting is restricted to one-dimensional local systems,
such models constitute an important class of physically motivated
quantum many-body systems. Examples include quantum spin chains and
one-dimensional lattice models, and related effective descriptions arise
in quantum-simulation platforms such as ultracold atoms in optical
lattices and trapped-ion quantum simulators
\cite{bloch2008many,monroe2021programmable}. In such platforms, available
controls and physically relevant perturbations are often local or
few-body, reflecting the spatial structure of the underlying system.
Consequently, identifying which sites have a large dynamical influence
on the global evolution and on the selected output observables is
relevant to practical tasks such as control-input placement, sensitivity
analysis with respect to local disturbances, and the design of quantum
simulation experiments.

For such systems, MPS provides a natural computational representation
when the entanglement growth over the time horizon of interest remains
moderate. In this regime, the nominal trajectory and the adjoint states
required for the channel-wise Gramian construction can be approximated
efficiently in MPS form
\cite{paeckel2019time,orus2014practical,vidal2004efficient}.

\subsection{Matrix Product States: General Scope and Computational Advantage}

MPS representations \cite{vidal2004efficient} are particularly effective for one-dimensional quantum systems whose Hamiltonians are local with respect to the chain structure. Under such conditions, the time-evolved quantum state can often be approximated accurately with a moderate bond dimension \(\chi\) over finite time horizons. The bond dimension \(\chi\) is an internal rank parameter of the MPS representation. For an \(n\)-qubit system, it satisfies $1\leq \chi\leq 2^{\lfloor n/2 \rfloor}$. 
Taking \(\chi = 2^{\lfloor n/2 \rfloor}\) allows an exact representation of an arbitrary state, whereas smaller values of \(\chi\) correspond to low-rank approximations of the state. Thus, when the relevant states can be approximated accurately with moderate \(\chi\), the MPS representation reduces both storage and computational costs compared with the full state vector representation.

The MPS framework efficiently supports the operations required in our
computation:
\begin{itemize}
    \item time evolution under local Hamiltonians,
    \item application of local operators,
    \item inner-product evaluation between pure quantum states.
\end{itemize}
These operations can be implemented through local tensor updates and sequential tensor contractions. For a chain of length \(n\), the typical computational cost scales polynomially in the bond dimension. In standard MPS time-evolution algorithms, the short-time propagator is decomposed into local operators that are applied successively along the chain, with a typical cost of order \(\mathcal{O}(n\chi^3)\) per time step \cite{paeckel2019time}.
The storage cost of an MPS with bond dimension  \(\chi\) scales as \(\mathcal{O}(n\chi^2)\), whereas storing a full many-body state vector requires \(\mathcal{O}(2^n)\) complex numbers \cite{orus2014practical}.

The approximation accuracy is controlled by the truncation threshold used
in the singular value decompositions. For finite time intervals, the
cumulative truncation error can be monitored, and the accuracy can be
systematically improved by increasing the bond dimension \(\chi\).

\subsection{MPS Evaluation of the Output Sensitivities}

Recall from Section~\ref{sec:quantum-controllability} that the channel-wise output sensitivity admits
the adjoint representation
\begin{equation}
    [y_j(T,t)]_\ell
    =
    2\operatorname{Re}
    \left\langle
        z_\ell(T,t),
        (-\mathrm{i}H_j)\bar\psi(t)
    \right\rangle ,
    \label{eq:mps_output_sensitivity_component}
\end{equation}
where $z_\ell(T,t)$ is the adjoint state defined in Section~\ref{sec:quantum-controllability} as
\begin{equation}
    z_\ell(T,t)
    =
    \Phi(T,t)^*O_\ell\bar\psi(T).
\end{equation}

Throughout this section, the terminal time $T$ is fixed. For notational
simplicity, we write
\begin{equation}
    y_j(t):=y_j(T,t),
    \qquad
    z_\ell(t):=z_\ell(T,t).
\end{equation}
Hence, the construction of the channel-wise Gramian does not require
explicit formation of the state-transition operator. Instead, it is
sufficient to compute the nominal state $\bar\psi(t)$, the adjoint
states $z_\ell(t)$, and their inner products with the locally perturbed
states $(-\mathrm{i}H_j)\bar\psi(t)$. These quantities can be
represented and evaluated directly in MPS form.

We now describe their evaluation at the quadrature points $t_k$.
First, the nominal trajectory $\bar\psi(t_k)$ is computed by a standard
MPS time-evolution algorithm and stored in MPS form at the quadrature
points $t_k$.

At the terminal time $T$, the terminal value of the adjoint state
associated with each observable $O_\ell$ is
\begin{equation}
    z_\ell(T)
    =
    z_\ell(T,T)
    =
    O_\ell\bar\psi(T).
    \label{eq:mps_terminal_observable_state}
\end{equation}
We assume that each observable $O_\ell$ is local, few-body, or admits
an MPO representation with moderate bond dimension. Therefore, the
state $z_\ell(T)=O_\ell\bar\psi(T)$ can be efficiently represented as
an MPS.

For each time point $t_k$, we evaluate the adjoint state by backward
propagation as
\begin{equation}
    z_\ell(t_k)
    =
    \Phi(T,t_k)^\ast z_\ell(T).
    \label{eq:mps_backward_observable_state}
\end{equation}

Next, for each control direction $H_j$ and each time point $t_k$, we
form the perturbation state
\begin{equation}
    r_j(t_k)
    :=
    (-\mathrm{i}H_j)\bar\psi(t_k).
    \label{eq:mps_local_perturbation_state}
\end{equation}
Since $H_j$ is a local operator, this state can also be efficiently
constructed in MPS form.

Using the adjoint representation derived in Section~\ref{sec:quantum-controllability}, the output
sensitivity at each quadrature point is then computed as
\begin{equation}
    \bigl(y_j(t_k)\bigr)_\ell
    =
    2\operatorname{Re}
    \left\langle
        z_\ell(t_k),
        r_j(t_k)
    \right\rangle .
    \label{eq:mps_output_sensitivity_mps}
\end{equation}

Thus, the computation of $y_j(t_k)$ requires only operations that are
compatible with the MPS framework: application of local or few-body
operators, adjoint time evolution, and MPS inner-product contractions.

\subsection{Practical Construction of the Output Gramian}\label{practical-construction}

Using the shorthand introduced above, the channel-wise Gramian
contribution defined in Section~\ref{sec:quantum-controllability} can be written as
\begin{equation}
    W_j(T)
    =
    \int_0^T
    y_j(t)y_j(t)^\top \,\mathrm dt .
    \label{eq:mps_Wmu_output}
\end{equation}

In numerical implementation, we introduce the discrete Gramian
\begin{equation}
W_j^h(T)
:=
h\sum_{k=0}^{N}
\omega_k
y_j(t_k)y_j(t_k)^\top,
\label{eq:exact-discrete-gramian}
\end{equation}
where $\omega_k\geq 0$ are quadrature weights. For example, for the
trapezoidal rule,

\[
\omega_0=\omega_N=\frac{1}{2},
\qquad
\omega_k=1,
\quad k=1,\ldots,N-1.
\]
The MPS/TEBD computation produces approximate sensitivity vectors
$\widehat y_j(t_k)$ and the corresponding approximate discrete Gramian
\begin{equation}
\widehat W_j^h(T)
:=
h\sum_{k=0}^{N}
\omega_k
\widehat y_j(t_k)\widehat y_j(t_k)^\top.
\label{eq:approx-discrete-gramian}
\end{equation}
\subsection{Error Bound for the MPS/TEBD Approximation}
\label{subsec:mps-error}

We next quantify the approximation error introduced by the
Trotter decomposition and the MPS truncation in the computation
of the output Gramian.

For the error analysis, let
\[
    \bar\psi_k:=\bar\psi(t_k),
    \qquad
    z_{k,l}:=z_l(t_k),
    \qquad
    y_{j,k}:=y_j(t_k)
\]
denote the exact quantities at $t_k$, and let
\[
    \widehat\psi_k,
    \qquad
    \widehat z_{k,l},
    \qquad
    \widehat y_{j,k}
\]
denote their MPS/TEBD approximations, where
\begin{equation}
    (\widehat y_{j,k})_l
    :=
    2\operatorname{Re}
    \left\langle
        \widehat z_{k,l},
        -\mathrm{i}H_j\widehat\psi_k
    \right\rangle .
    \label{eq:approx-sensitivity-main}
\end{equation}

We assume a uniform time step $h$ and second-order symmetric
Trotter evolution.  A standard second-order product-formula
estimate gives a constant $C_{\mathrm{Tr}}>0$, independent of
$h$, such that
\begin{equation}
    \|U-S\|_2
    \leq
    C_{\mathrm{Tr}}h^3,
    \label{eq:trotter-local-main}
\end{equation}
where $U$ and $S$ denote one exact and one Trotterized time step,
respectively.  An explicit commutator expression for
$C_{\mathrm{Tr}}$ is given in Appendix~\ref{app:mps-error}.

Let $B_q^\psi$ denote the number of Schmidt truncations
during the forward TEBD step $q$, and let
$\delta_{q,b}^{\psi}$, $b=1,\ldots,B_q^\psi$, be the discarded
weight at the $b$-th truncation.
For the backward calculation, let $\delta_{q,l,b}^{z}$,
$b=1,\ldots,B_{q,l}^{z}$, and
$\delta_{N,l,b}^{\mathrm{term}}$,
$b=1,\ldots,B_{N,l}^{\mathrm{term}}$, denote the corresponding
discarded weights during backward propagation and terminal
compression, respectively.

We define
\begin{equation}
    E_k^{\psi,\mathrm{Tr}}
    :=
    t_k C_{\mathrm{Tr}}h^2, \quad
    E_k^{\psi,\mathrm{cut}}
    :=
    \sum_{q=1}^{k}
    \sum_{b=1}^{B_q^\psi}
    \sqrt{2\delta_{q,b}^{\psi}},
    \label{eq:error}
\end{equation}
and, with $L_l:=\|O_l\|_2$,
\begin{align}
    E_{k,l}^{z,\mathrm{Tr}}
    &:=
    L_l(2T-t_k)C_{\mathrm{Tr}}h^2,
    \label{eq:backward-trotter-budget-main}\\
    E_{k,l}^{z,\mathrm{cut}}
    &:=
    L_lE_N^{\psi,\mathrm{cut}}
    +
    \sum_{b=1}^{B_{N,l}^{\mathrm{term}}}
    \sqrt{\delta_{N,l,b}^{\mathrm{term}}}
    +
    \sum_{q=k}^{N-1}
    \sum_{b=1}^{B_{q,l}^{z}}
    \sqrt{\delta_{q,l,b}^{z}}.
    \label{eq:backward-cut-budget-main}
\end{align}
Then the forward and backward state errors satisfy
\begin{equation}
    \|\bar\psi_k-\widehat\psi_k\|_2
    \leq
    E_k^{\psi,\mathrm{Tr}}
    +
    E_k^{\psi,\mathrm{cut}},\quad
    \|z_{k,l}-\widehat z_{k,l}\|_2
    \leq
    E_{k,l}^{z,\mathrm{Tr}}
    +
    E_{k,l}^{z,\mathrm{cut}}.
    \label{eq:backward-state-error-main}
\end{equation}
The derivation of these bounds is given in
Appendix~\ref{app:mps-error}.
The truncation terms are directly computable from the discarded
Schmidt coefficients recorded during the TEBD calculation, whereas
the Trotter terms follow from standard product-formula error
estimates.

We next propagate these bounds to the output sensitivities.
Let $a_j:=\|H_j\|_2.$ Define
\begin{equation}
    \eta_{j,k,l}^{\mathrm{Tr}}
    :=
    2a_j
    \left(
        L_lE_k^{\psi,\mathrm{Tr}}
        +
        E_{k,l}^{z,\mathrm{Tr}}
    \right), \quad
    \eta_{j,k,l}^{\mathrm{cut}}
    :=
    2a_j
    \left(
        L_lE_k^{\psi,\mathrm{cut}}
        +
        E_{k,l}^{z,\mathrm{cut}}
    \right),
    \label{eq:error-eta}
\end{equation}

and
\begin{align}
    \eta_{j,k}^{\mathrm{Tr}}
    &:=
    \left(
        \sum_{l=1}^{r}
        \left(\eta_{j,k,l}^{\mathrm{Tr}}\right)^2
    \right)^{1/2}, \quad
    \eta_{j,k}^{\mathrm{cut}}
    :=
    \left(
        \sum_{l=1}^{r}
        \left(\eta_{j,k,l}^{\mathrm{cut}}\right)^2
    \right)^{1/2}.
\end{align}
Then
\begin{equation}
    \|y_{j,k}-\widehat y_{j,k}\|_2
    \leq
    \eta_{j,k}^{\mathrm{Tr}}
    +
    \eta_{j,k}^{\mathrm{cut}}.
    \label{eq:sensitivity-error-main}
\end{equation}

Let $W_j^h(T)$ and $\widehat W_j^h(T)$ denote the discrete Gramians constructed from $y_{j,k}$ and $\widehat y_{j,k}$, respectively, using the same numerical integration rule as in Section~\ref{practical-construction}.

\begin{theorem}[Error of the MPS/TEBD Gramian approximation]
\label{thm:mps-gramian-error}
The discrete-Gramian error satisfies
\begin{equation}
    \|W_j^h(T)-\widehat W_j^h(T)\|_2
    \leq
    E_j^{\mathrm{Tr}}
    +
    E_j^{\mathrm{cut}}
    +
    E_j^{\mathrm{cross}},
    \label{eq:gramian-error-decomposition-main}
\end{equation}
where
\begin{align}
    E_j^{\mathrm{Tr}}
    &:=h
    \sum_k \omega_k
    \left[
        2\|\widehat y_{j,k}\|_2
        \eta_{j,k}^{\mathrm{Tr}}
        +
        \left(\eta_{j,k}^{\mathrm{Tr}}\right)^2
    \right],
    \label{eq:gramian-trotter-main}\\
    E_j^{\mathrm{cut}}
    &:=h
    \sum_k \omega_k
    \left[
        2\|\widehat y_{j,k}\|_2
        \eta_{j,k}^{\mathrm{cut}}
        +
        \left(\eta_{j,k}^{\mathrm{cut}}\right)^2
    \right],
    \label{eq:gramian-cut-main}\\
    E_j^{\mathrm{cross}}
    &:=2h
    \sum_k \omega_k
    \eta_{j,k}^{\mathrm{Tr}}
    \eta_{j,k}^{\mathrm{cut}},
    \label{eq:gramian-cross-main}
\end{align}
and $\omega_k\geq0$ are the weights used in the numerical integration
of the Gramian.
\end{theorem}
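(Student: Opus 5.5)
The bound follows from the rank-one perturbation identity, the triangle inequality across quadrature nodes, and the sensitivity bound \eqref{eq:sensitivity-error-main}, which we take as given. Write $\Delta_{j,k}:=y_{j,k}-\widehat y_{j,k}$. Both discrete Gramians use the same nonnegative weights, so
\[
W_j^h(T)-\widehat W_j^h(T)=h\sum_{k}\omega_k\left(y_{j,k}y_{j,k}^\top-\widehat y_{j,k}\widehat y_{j,k}^\top\right).
\]
Substituting $y_{j,k}=\widehat y_{j,k}+\Delta_{j,k}$ gives, at each node,
\[
y_{j,k}y_{j,k}^\top-\widehat y_{j,k}\widehat y_{j,k}^\top=\widehat y_{j,k}\Delta_{j,k}^\top+\Delta_{j,k}\widehat y_{j,k}^\top+\Delta_{j,k}\Delta_{j,k}^\top.
\]
We expand around the computed quantity $\widehat y_{j,k}$ rather than the exact one, so that the final bound is expressed in terms of $\|\widehat y_{j,k}\|_2$, as in \eqref{eq:gramian-trotter-main}.

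\textbf{Per-node estimate.} For vectors $a,b$ the spectral norm of a rank-one matrix is $\|ab^\top\|_2=\|a\|_2\|b\|_2$. The triangle inequality then gives
\[
\|y_{j,k}y_{j,k}^\top-\widehat y_{j,k}\widehat y_{j,k}^\top\|_2\le 2\|\widehat y_{j,k}\|_2\|\Delta_{j,k}\|_2+\|\Delta_{j,k}\|_2^2.
\]
By \eqref{eq:sensitivity-error-main}, $\|\Delta_{j,k}\|_2\le \eta_{j,k}^{\mathrm{Tr}}+\eta_{j,k}^{\mathrm{cut}}$. The right-hand side above is increasing in $\|\Delta_{j,k}\|_2\ge0$, so we may replace $\|\Delta_{j,k}\|_2$ by this upper bound. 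Expanding
\[
2\|\widehat y_{j,k}\|_2(\eta^{\mathrm{Tr}}+\eta^{\mathrm{cut}})+(\eta^{\mathrm{Tr}}+\eta^{\mathrm{cut}})^2
\]
separates it into three pieces:
\begin{itemize}
\item a pure Trotter part, $2\|\widehat y_{j,k}\|_2\eta^{\mathrm{Tr}}+(\eta^{\mathrm{Tr}})^2$;
\item a pure truncation part, $2\|\widehat y_{j,k}\|_2\eta^{\mathrm{cut}}+(\eta^{\mathrm{cut}})^2$;
\item a cross term, $2\eta^{\mathrm{Tr}}\eta^{\mathrm{cut}}$.
\end{itemize}

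\textbf{Summation.} Apply the triangle inequality to the weighted sum, using $h>0$ and $\omega_k\ge0$. Multiplying each of the three per-node pieces by $h\omega_k$ and summing over $k$ yields exactly $E_j^{\mathrm{Tr}}$, $E_j^{\mathrm{cut}}$ and $E_j^{\mathrm{cross}}$, which proves \eqref{eq:gramian-error-decomposition-main}.

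No step is a real obstacle once \eqref{eq:sensitivity-error-main} is available; the only care needed is the monotonicity argument that justifies substituting the upper bound for $\|\Delta_{j,k}\|_2$.
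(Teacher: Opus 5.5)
Your proof is correct and essentially matches the paper's. The paper instead uses the two-term identity $xx^\top-yy^\top=(x-y)y^\top+x(x-y)^\top$, giving the per-node bound $(\|y_{j,k}\|_2+\|\widehat y_{j,k}\|_2)\|y_{j,k}-\widehat y_{j,k}\|_2$, and then replaces $\|y_{j,k}\|_2$ by $\|\widehat y_{j,k}\|_2+\eta_{j,k}^{\mathrm{Tr}}+\eta_{j,k}^{\mathrm{cut}}$; this yields the same quantity $2\|\widehat y_{j,k}\|_2(\eta_{j,k}^{\mathrm{Tr}}+\eta_{j,k}^{\mathrm{cut}})+(\eta_{j,k}^{\mathrm{Tr}}+\eta_{j,k}^{\mathrm{cut}})^2$ that your three-term expansion around $\widehat y_{j,k}$ reaches in one step, and the splitting into Trotter, truncation and cross terms is identical.
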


The proof is given in Appendix~\ref{app:mps-error}.
In particular, the second-order Trotter contribution is
$O(h^2)$ over a fixed time horizon, whereas the MPS-truncation
contribution is determined a posteriori by the Schmidt
coefficients discarded during the actual TEBD calculation.
Thus, the bound separates the two main numerical error sources
and quantifies their propagation to the discrete output Gramian.
The quadrature error between the continuous-time Gramian and the
discrete Gramian is not considered here.
\subsection{Summary of the Algorithm}
The complete procedure is summarized in Algorithm \ref{alg:mps-controllability-scoring}.

\begin{algorithm}[t]
    \caption{MPS-Based Controllability Allocation Scoring}
    \label{alg:mps-controllability-scoring}
    \textbf{Input:} Drift Hamiltonian $H_0$, initial state $\bar{\psi}(0)$,\\
    final time $T$, virtual inputs $H_1,H_2,\ldots,H_m$,
    observables $O_1,\ldots,O_r$,\\
    quadrature points $t_0,\ldots,t_N$, time step $h$,
    and quadrature weights $\omega_0,\ldots,\omega_N$.\\
    \textbf{Output:} Numerical approximation $\widehat p$ of the V-CAS
    \begin{algorithmic}[1]
        \STATE Compute the approximate nominal trajectory
        $\widehat{\psi}_k$ at $t_k$ in MPS form by forward TEBD

        \FOR{$\ell=1,2,\ldots,r$}
            \STATE Form $O_\ell\widehat{\psi}_N$ in MPS form and,
            if necessary, compress it to obtain $\widehat z_{N,\ell}$
            \STATE Compute $\widehat z_{k,\ell}$ for
            $k=N-1,\ldots,0$ by backward TEBD
        \ENDFOR

        \FOR{$j=1,2,\ldots,m$}
            \STATE Set $\widehat W_j^h(T)=0$
            \FOR{each quadrature point $t_k$}
                \STATE Compute
                \[
                    \widehat r_{j,k}
                    =
                    (-\mathrm{i}H_j)\widehat{\psi}_k
                \]
                \STATE Compute
                \[
                    \widehat y_{j,k}
                    =
                    \begin{bmatrix}
                        2\operatorname{Re}
                        \langle \widehat z_{k,1},\widehat r_{j,k}\rangle\\
                        \vdots\\
                        2\operatorname{Re}
                        \langle \widehat z_{k,r},\widehat r_{j,k}\rangle
                    \end{bmatrix}
                \]
                \STATE
                $\widehat W_j^h(T)
                \leftarrow
                \widehat W_j^h(T)
                +
                h\omega_k
                \widehat y_{j,k}\widehat y_{j,k}^{\top}$
            \ENDFOR
        \ENDFOR

        \STATE Solve the numerical approximation of
        Problem~\eqref{opt:controllability}, with
        $W_j(T)$ replaced by $\widehat W_j^h(T)$,
        using the method of \cite[Algorithm~1]{sato2026projected}

        \RETURN $\widehat p$
    \end{algorithmic}
\end{algorithm}

\section{Numerical Experiments}
\label{sec:numerical}

We conduct three numerical studies. 
First, we compare the proposed V-CAS with static quantum centralities.
Second, we examine whether the finite-time linearized response captured by the score remains informative for nonlinear terminal-output control in hard-core-boson chains. 
Finally, we evaluate the accuracy and computational cost of the MPS implementation.

\subsection{Comparison with Static Quantum Centralities}
\label{subsec:compare}

We compare the proposed V-CAS with mutual-information strength \cite{sundar2018complex}, concurrence strength \cite{nokkala2024complex}, and mutual-information betweenness centrality \cite{sundar2018complex} for four eight-qubit models: central spin, small world, modular two-cluster, and disordered XXZ.
For all systems, we use
\[
    H_j=O_j=Z_j,\qquad T=0.6,\qquad \Delta t=0.02.
\]
The static centralities are evaluated from the uncontrolled terminal state
$\bar\psi(T)$. Their definitions and the detailed model parameters are given
in Appendix~\ref{sec:numerical_experiment_settings}.

\begin{table}[t]
\centering
\caption{
Pearson correlation / Spearman rank correlation between the proposed V-CAS and static quantum centralities.
}
\label{tab:static-centrality-correlation}
\begin{tabular}{lccc}
\hline
System
& Mutual information
& Concurrence
& MI betweenness
\\
\hline
Central spin
& $0.9306/0.8333$
& $0.7950/0.7857$
& $0.7190/0.5774$
\\
Small world
& $0.5352/0.7857$
& $0.5486/0.7381$
& $-0.3449/-0.3130$
\\
Modular two-cluster
& $0.6165/0.7619$
& $0.7198/0.9762$
& $0.2131/0.4665$
\\
Disordered XXZ
& $0.1545/0.0714$
& $0.2398/0.1190$
& $-0.4598/-0.4470$
\\
\hline
\end{tabular}
\end{table}
Table~\ref{tab:static-centrality-correlation} shows that the agreement between the proposed score and the static centralities varies across systems and measures. 
While substantial agreement is observed in some cases, the weak or negative correlations in the disordered XXZ chain show that the proposed V-CAS can induce a different ranking of influential sites. 
This supports its interpretation as a finite-time dynamical influence measure rather than a static correlation- or entanglement-based centrality.

\subsection{Relation to Nonlinear Output Control}\label{subsec:control_experiment}

We next examine whether the finite-time linearized response captured by the proposed V-CAS remains informative for nonlinear terminal-output control. 
For each system, we generate terminal targets, select two control channels from all available local controls, and optimize those channels to approach the target as closely as possible. 
We then compare the proposed score-based selection with several alternative placement criteria.

For an uncontrolled output $y_0(T)$ and target $y_{\rm tar}$, performance is measured by the normalized terminal error
\[
    e_{\rm norm}
    =
    \frac{
        \|y(T)-y_{\rm tar}\|_2
    }{
        \|y_0(T)-y_{\rm tar}\|_2
    }.
\]
The score-based top-two rule selects the two largest components of the continuous V-CAS vector and is not assumed to solve a discrete channel-selection problem.
Details of target generation and nonlinear optimization are given in Appendix~\ref{app:hcb_benchmark}.

We consider a half-filled hard-core-boson chain of length $L=10$, described by a standard Hamiltonian with nearest-neighbor hopping, density interactions, and site-dependent potentials \cite{cazalilla2011one}:
\begin{equation}
    H_0
    =
    -\sum_{j=1}^{L-1}
    t_j
    \left(
        b_j^\dagger b_{j+1}
        +
        b_{j+1}^\dagger b_j
    \right)
    +
    V\sum_{j=1}^{L-1}n_jn_{j+1}
    +
    \sum_{j=1}^{L}h_jn_j .
    \label{eq:hcb_hamiltonian}
\end{equation}
Here, $n_j=b_j^\dagger b_j$, and the hard-core constraint allows at most one particle per site.
Since $H_0$ conserves the total particle number, the dynamics at half filling with five particles remain in a finite-dimensional Hilbert space of dimension $\binom{10}{5}=252$.
We set $V=1$ and consider two representative Hamiltonian configurations: a central two-site barrier and harmonic confinement.
The virtual and physical control Hamiltonians are the local density potentials \(H_j=n_j\), and the initial state is the charge-density-wave state \(\bar\psi(0)=\lvert1010101010\rangle\). 
The outputs span the nine independent density-profile directions. 
Detailed model parameters and output coordinates are given in Appendix~\ref{app:hcb_benchmark}.

For each system instance, the V-CAS is computed once along the uncontrolled trajectory over the horizon \(T=2\), and the same allocation is subsequently used for all terminal targets.
The two largest components of the V-CAS define the score-based two-site preselection.
We compare this selection with discrete log-determinant optimization, trace-based selection, a geometry-dependent physical heuristic, the boundary pair, and three fixed random pairs.
All placement rules are target independent and are determined once for each system before the nonlinear-control experiments.

Figure~\ref{fig:hcb_cas_barrier_trap} illustrates the spatial structure of the two Hamiltonian configurations and the corresponding V-CAS allocations.
In both configurations, we consider controlling the terminal particle-density profile by varying local on-site potentials.
In Fig.~\ref{fig:hcb_cas_barrier_trap}(a), a potential barrier is introduced at the two central sites, restricting particle transport across the center of the chain.
The score-based top-two rule selects sites $\{3,6\}$, suggesting that controlling the on-site potentials at the barrier and in its surrounding region is effective for manipulating the terminal density profile.
In contrast, Fig.~\ref{fig:hcb_cas_barrier_trap}(b) considers a harmonic confining potential that increases from the center toward both ends of the chain.
In this configuration, the score-based top-two rule selects sites $\{3,8\}$, suggesting that controlling the on-site potentials on opposite sides of the confined region is effective.
These allocations are determined from the uncontrolled dynamics, independently of the terminal targets, and demonstrate that the preferred control locations depend on the spatial structure of the underlying potential.

\begin{figure}[t]
    \centering
    \includegraphics[width=\linewidth]{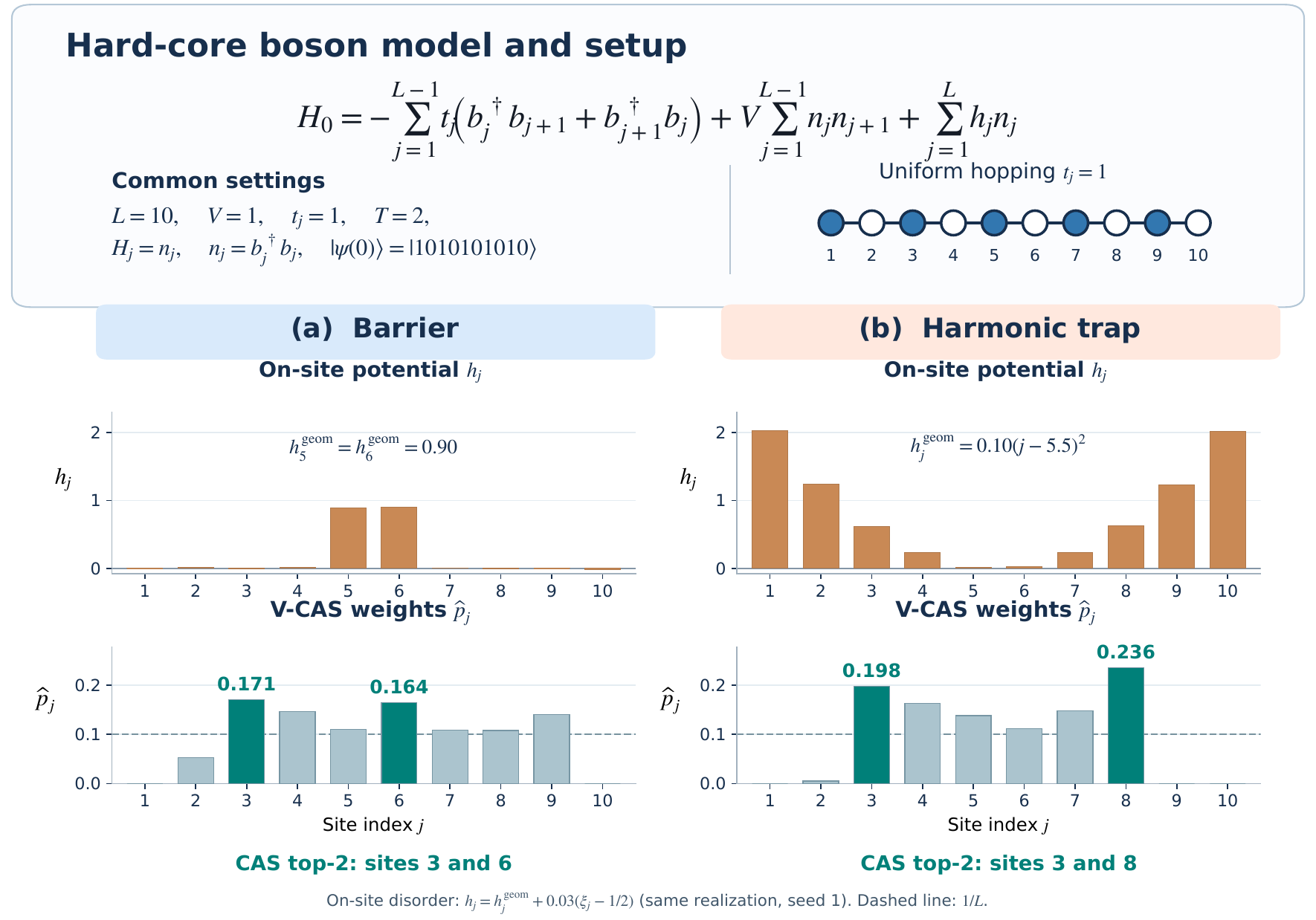}
    \caption{%
    Spatial structure and controllability-based input allocation in two
    half-filled hard-core-boson chains.
    The top panel shows the drift Hamiltonian, common parameters, and the
    initial charge-density-wave state.
    The on-site potentials are
    $h_j=h_j^{\mathrm{geom}}+0.03(\xi_j-1/2)$, with the same disorder
    realization (seed $1$) in both configurations.
    (a) Central two-site barrier with
    $h_5^{\mathrm{geom}}=h_6^{\mathrm{geom}}=0.90$.
    (b) Harmonic confinement with
    $h_j^{\mathrm{geom}}=0.10(j-5.5)^2$.
    In each case, the lower panel shows the V-CAS weights
    $\widehat p_j$ computed from the uncontrolled trajectory at $T=2$
    using MPS/TEBD. Dark-green bars indicate the two sites selected by
    the score-based top-two rule: $\{3,6\}$ for the barrier and
    $\{3,8\}$ for the trap. Dashed lines mark the uniform
    allocation $p_j=1/L$.
    }
    \label{fig:hcb_cas_barrier_trap}
\end{figure}

We generate five target directions from global, center-weighted, edge-weighted, staggered, and tilt teacher-control families.
For each family, one teacher-control direction is generated using all $L$ local density controls, and the same direction is scaled to amplitudes $0.15$, $0.30$, and $0.50$.
The resulting nonlinear trajectories define $15$ target--amplitude cases for each geometry and $30$ cases over the two geometries.

For each target, we solve the nonlinear terminal-output control problem after restricting the available controls to the selected two-site set.
Only the distinct site pairs returned by the target-independent placement rules described above are evaluated; when multiple rules select the same pair, the corresponding optimization is performed only once.
The controlled dynamics and the adjoint calculation are evaluated using the MPS/TEBD implementation described in Sec.~\ref{sec:MPS}.
Further details of the target generation and nonlinear-control optimization are provided in Appendix~\ref{app:hcb_benchmark}.

\textit{Finite-time uniqueness.}
Since the charge-density-wave initial state is an eigenstate of every \(n_j\), the initial-response condition of Proposition~\ref{prop-unique} is not satisfied.
Nevertheless, feasibility and affine independence of the finite-time Gramian contributions at \(T=2\) are numerically confirmed for all system instances; see Appendix~\ref{app:hcb_uniqueness} for the complete diagnostics.

\begin{table}[t]
\centering
\caption{
Mean normalized terminal errors for the target-independent two-site
placement rules in the hard-core-boson benchmark.
Each geometry contains $15$ target--amplitude cases.
The overall column is the mean over all $30$ cases, and ``Wins'' denotes
the number of cases in which a method attains the smallest terminal error
among the compared placement rules.
}
\label{tab:hcb_aggregate}
\small
\setlength{\tabcolsep}{4pt}
\begin{tabular}{lcccc}
\toprule
Selection rule
& Barrier
& Trap
& Overall
& Wins \\
\midrule
Score top-2
& \(\mathbf{0.417}\)
& \(0.561\)
& \(\mathbf{0.489}\)
& \(\mathbf{6}\) \\
Discrete log-det
& \(0.533\)
& \(0.720\)
& \(0.626\)
& \(4\) \\
Trace
& \(0.661\)
& \(\mathbf{0.413}\)
& \(0.537\)
& \(\mathbf{6}\) \\
Physical heuristic
& \(0.705\)
& \(0.682\)
& \(0.693\)
& \(1\) \\
Boundary
& \(0.531\)
& \(0.700\)
& \(0.615\)
& \(5\) \\
Random fixed 1
& \(0.600\)
& \(0.923\)
& \(0.761\)
& \(2\) \\
Random fixed 2
& \(0.717\)
& \(0.679\)
& \(0.698\)
& \(2\) \\
Random fixed 3
& \(0.664\)
& \(0.616\)
& \(0.640\)
& \(4\) \\
\bottomrule
\end{tabular}
\end{table}
The results in Table~\ref{tab:hcb_aggregate} support the proposed V-CAS as a target-independent, output-aware preselection criterion rather than as a universally optimal discrete placement rule.
The score-based top-two selection achieves the lowest aggregate mean normalized terminal error among the compared target-independent placement rules.
Moreover, its mean error is lower than that of the exhaustive discrete log-determinant placement in both configurations. 
The score-based rule attains the smallest overall mean normalized terminal error, while its number of wins is tied with the trace-based rule.
Although the trace-based rule achieves a smaller mean error for the trap configuration, the score-based selection provides the smallest aggregate mean error
across the two configurations.This configuration dependence is consistent with the finite-time, target-independent nature of the score, while the top-two rule should be interpreted only as a sparse truncation of the continuous allocation.

\subsection{Accuracy and Computational Cost of the MPS Approximation}
\label{sec:mps_benchmark}

Finally, we evaluate the accuracy and computational cost of the MPS-based V-CAS computation by comparison with an exact full-Hilbert-space implementation.
We consider a disordered nearest-neighbor transverse-field Ising chain
\begin{equation}
    H_0
    =
    \sum_{j=1}^{n}
    \left(
        0.8 X_j+h_j^z Z_j
    \right)
    +
    \sum_{j=1}^{n-1}
    J_j Z_j Z_{j+1},
\end{equation}
with the initial state
\[
    \bar{\psi}(0)=\ket{0}^{\otimes n}.
\]
The local virtual Hamiltonians and output observables are chosen as
\[
    H_j=O_j=Z_j,
    \qquad
    j=1,\ldots,n.
\]
We use the terminal time \(T=0.6\) and time step \(\Delta t=0.02\).
For the MPS calculation, the maximum bond dimension is \(\chi_{\max}=10\), with singular-value cutoff \(10^{-10}\).
The detailed disorder realization and computational settings are given in Appendix~\ref{app:mps_benchmark}.

\begin{table}[t]
    \centering
    \caption{
        Accuracy and computational speedup of the MPS implementation
        relative to the exact full-Hilbert-space computation.
        The speedup is defined as the exact runtime divided by the
        MPS runtime.
    }
    \label{tab:mps_accuracy}
    \begin{tabular}{cccccc}
        \hline
        \(n\)
        & Pearson
        & Cosine
        & \(\ell_1\) error
        & Top-3 overlap
        & Speedup
        \\
        \hline
        6
        & 0.9999
        & 1.0000
        & \(3.93\times10^{-8}\)
        & 1.000
        & \(0.08\times\)
        \\
        8
        & 0.9819
        & 1.0000
        & \(7.27\times10^{-9}\)
        & 1.000
        & \(0.35\times\)
        \\
        10
        & 0.9995
        & 1.0000
        & \(9.71\times10^{-8}\)
        & 1.000
        & \(2.38\times\)
        \\
        \hline
    \end{tabular}
\end{table}
\begin{figure}[t]
    \centering
    \includegraphics[width=0.85\linewidth]{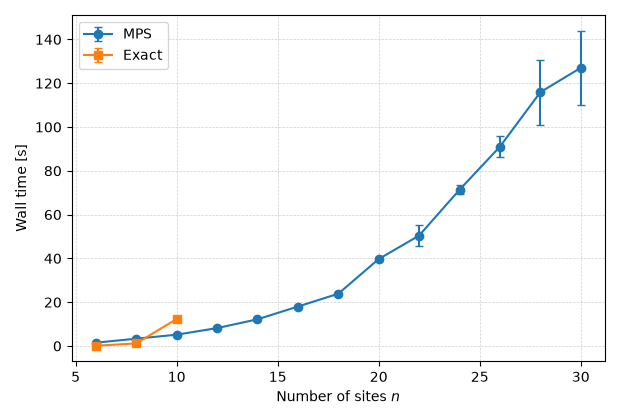}
    \caption{
    Runtime comparison between the MPS and exact full-Hilbert-space
    implementations for the disordered nearest-neighbor transverse-field
    Ising chain.
    Each point shows the mean of three runs, and the error bars indicate
    one standard deviation.
    }
    \label{fig:mps_runtime}
\end{figure}

Table~\ref{tab:mps_accuracy} compares the V-CAS vectors obtained from the MPS and exact implementations on the common system sizes \(n=6,8,10\).
For all three sizes, the cosine similarity is \(1.0000\), the \(\ell_1\) distance remains below \(10^{-7}\), and the three highest-scoring sites coincide exactly.
The Pearson correlation is also high, ranging from \(0.9819\) to \(0.9999\).
These results show that the MPS implementation accurately reproduces both the score distribution and the highest-ranked local control directions in the examined regime.

Figure~\ref{fig:mps_runtime} compares the computational runtimes.
For small systems, the tensor-network overhead makes the MPS calculation slower than the exact full-Hilbert-space implementation.
At \(n=10\), however, the MPS calculation is approximately \(2.38\) times faster.
The cost of the exact computation increases rapidly with system size, whereas the MPS implementation remains applicable up to the largest tested system, \(n=30\).
Thus, in the examined regime, the MPS formulation accurately reproduces the V-CAS while extending the practically accessible system size.
\section{Concluding Remarks}\label{sec:conclusion}
\subsection{Summary}

This paper developed a controllability allocation scoring framework for finite-dimensional quantum systems.

We first formulated the linear input-to-terminal-output operator associated with the linearized Schrödinger dynamics and defined the corresponding output controllability Gramian. 
The Gramian is positive semidefinite, its range coincides with the reachable subspace of terminal-output variations, and its positive definiteness is equivalent to output controllability of the linearized input-output system. 
We also established the associated minimum-energy characterization: the Moore--Penrose pseudoinverse of the Gramian determines the minimum input energy required to generate a prescribed reachable terminal-output variation.

Based on this Gramian, we introduced virtual local control Hamiltonians and defined a V-CAS through the simplex-constrained log-determinant optimization problem \eqref{opt:controllability}.
The resulting weight vector describes an optimal allocation of perturbation directions that maximizes the volume of the first-order reachable set in the selected output space. 
Thus, the proposed score should be interpreted as an output-dependent and trajectory-dependent measure of the cooperative dynamical influence of the virtual control directions.

We further derived a sufficient condition for uniqueness of the V-CAS. 
The condition is expressed solely in terms of the initial state, the virtual control Hamiltonians, and the differential of the output map at the initial state. 
In particular, the condition does not depend on the drift Hamiltonian. 
Under this condition, the objective is strictly convex along the feasible directions of the simplex.
To make the framework applicable to quantum many-body systems, we formulated the Gramian directly in a prescribed observable space. 
This avoids constructing a Gramian on the full Hilbert space and allows the analysis to focus on physically relevant terminal observables. 
For one-dimensional systems with local interactions, we presented an implementation based on matrix product states and time-evolving block decimation. 
The required sensitivities can be evaluated through forward propagation of the nominal state, backward propagation of observable states, local operator applications, and MPS inner-product contractions, without explicitly constructing the full state-transition matrix.

The numerical results illustrate three aspects of the proposed framework. 
First, the V-CAS can produce rankings that differ from static correlation- and entanglement-based centralities, reflecting its dependence on finite-time dynamical propagation and on the selected outputs. 
Second, high-scoring local Hamiltonian directions can provide useful candidates for the preselection of control channels in terminal-output control problems. 
Third, the tensor-network implementation reproduces the exact V-CAS for small systems and remains applicable to larger one-dimensional systems when the relevant states admit accurate MPS representations with moderate bond dimension.

Overall, the proposed framework provides an output- and trajectory-dependent allocation of virtual control intensities that can be interpreted as a measure of dynamical influence in quantum systems.Overall, the proposed framework provides an output- and trajectory-dependent allocation of virtual control intensities that can be interpreted as a measure of dynamical influence in quantum systems.
Its interpretation and effectiveness depend on the initial state, the terminal time, the virtual control Hamiltonians, and the selected observables. 
It should therefore be regarded as a model-based diagnostic for a prescribed control and measurement setting, rather than as a universal structural centrality of the drift Hamiltonian alone.
\subsection{Future Directions}

Several extensions of the present framework remain to be investigated.

First, a systematic analysis of the dependence of the V-CAS on the initial state, the terminal time, and the choice of output observables is needed. 
Because the allocation is defined along a nominal trajectory, changes in these quantities may alter both the reachable output directions and the relative allocation weights of the virtual control Hamiltonians.
Studying the resulting parameter dependence may help distinguish transient dynamical influence from behavior that persists over longer time horizons.

Second, the present formulation is based on the first-order response around a nominal trajectory. 
An important theoretical direction is to quantify the validity of this linearized description for finite perturbations. 
This includes deriving bounds on the difference between the nonlinear terminal-output response and its linear approximation, and clarifying when a large allocation weight assigned by the V-CAS to a control direction predicts effective finite-amplitude control.

Third, the continuous weight optimization considered in this paper should be connected more directly to sparse control-channel selection. 
In practical settings, only a fixed number of local controls may be available. 
Developing discrete or sparsity-promoting formulations, together with approximation guarantees for selecting a prescribed number of channels, would strengthen the use of the V-CAS as a control-placement criterion.

Fourth, extending the framework to open quantum systems is an important direction. 
For dynamics governed by Lindblad equations, both coherent controls and dissipative channels contribute to the terminal-output response. 
A controllability allocation score for such systems could provide a unified means of evaluating control resources, disturbance directions, and dissipation mechanisms in noisy quantum devices.

Finally, it would be useful to relate the proposed V-CAS to experimentally observable response quantities. 
Examples include the amplification of local perturbations, output fluctuations induced by control noise, and the dispersion of measured observables. 
Such connections may provide a more direct operational interpretation of controllability-based dynamical centrality and support its use in the design of quantum control and measurement architectures.
\section*{Acknowledgements}
This work was supported by JST PRESTO, Japan, Grant Number JPMJPR25K4.

\titleformat{\section}
  {\normalfont\Large\bfseries}
  {Appendix \thesection.}
  {0.5em}
  {}
\appendix
\section{Proofs for Section~II}\label{app:proofs-prelim}
\subsection{Well-posedness and Absolute Continuity}\label{proof:well-posed-AC}
\begin{proposition}[Well-posedness and absolute continuity under $L^2$ inputs]
\label{prop:wellposed-AC}
Suppose that Assumptions hold. For any initial condition $\psi_0\in\CC^d$, the Schr\"odinger equation
\begin{equation}
    \mathrm{i}\dot\psi(t)
    =
    \left(
        H_0+\sum_{j=1}^{m}u_j(t)H_j
    \right)\psi(t),
    \qquad
    \psi(0)=\psi_0,
\end{equation}
admits a unique absolutely continuous solution
\[
    \psi:[0,T]\to\mathbb C^d
\]
satisfying
\[
    \psi(t)
    =
    \psi_0+\int_0^t A(\tau)\psi(\tau)\,d\tau,
\]

where
\begin{equation}
    A(t)
    :=
    -\mathrm{i}
    \left(
        H_0+\sum_{j=1}^{m}u_j(t)H_j
    \right).
\end{equation}

Moreover, there exists a unique propagator $\Phi(t,s)$ such that, for every $s\in[0,T]$,
\begin{equation}
    \frac{\partial}{\partial t}\Phi(t,s)
    =
    A(t)\Phi(t,s)
    \quad\text{for a.e. }t,
    \qquad
    \Phi(s,s)=I,
\end{equation}
and
\begin{equation}
    \psi(t)=\Phi(t,s)\psi(s).
\end{equation}
The propagator is strongly continuous in $(t,s)$ and unitary. In
particular,
\begin{equation}
    \Phi(t,s)^*=\Phi(s,t),
    \qquad
    \|\psi(t)\|=\|\psi_0\|,
    \qquad
    t,s\in[0,T].
\end{equation}
Hence, if $\psi_0\in\mathcal S_{\CC^d}$, then
$\psi(t)\in\mathcal S_{\CC^d}$ for every $t\in[0,T]$.
\end{proposition}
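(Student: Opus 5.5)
The plan is to reduce the problem to a linear ODE with an integrable coefficient matrix. The central object is \(A(t)=-\mathrm{i}(H_0+\sum_j u_j(t)H_j)\). Because \(u\in L^2([0,T];\RR^m)\subset L^1([0,T];\RR^m)\) on a bounded interval, we have \(\|A(\cdot)\|\in L^1([0,T])\). This is the Carath\'eodory setting.

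\textbf{Step 1: the propagator.} First I construct \(\Phi(t,s)\) as the solution of the matrix integral equation
\[
\Phi(t,s)=I+\int_s^t A(\tau)\Phi(\tau,s)\,\rd\tau .
\]
I solve it by Picard iteration, which gives the Peano--Baker series. The \(k\)-th term is bounded by \(\bigl(\int_s^t\|A\|\bigr)^k/k!\), so the series converges uniformly in \((t,s)\in[0,T]^2\). The limit is absolutely continuous in \(t\), is jointly continuous in \((t,s)\), and satisfies the differential equation for a.e.\ \(t\).

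\textbf{Step 2: uniqueness.} Let two absolutely continuous solutions have difference \(e\). Then \(\|e(t)\|\le\int_s^t\|A\|\|e\|\), and Gr\"onwall's inequality in integral form (valid for \(L^1\) coefficients) forces \(e\equiv0\). The same argument gives uniqueness of \(\Phi\). Setting \(\psi(t)=\Phi(t,0)\psi_0\) produces the required solution. Uniqueness also yields the cocycle law \(\Phi(t,s)\Phi(s,r)=\Phi(t,r)\), since both sides solve the same problem in \(t\). Taking \(r=t\) shows \(\Phi(t,s)\) is invertible with inverse \(\Phi(s,t)\), and hence \(\psi(t)=\Phi(t,s)\psi(s)\).

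\textbf{Step 3: unitarity.} Compute \(\frac{\rd}{\rd t}\bigl(\Phi(t,s)^\dagger\Phi(t,s)\bigr)\) for a.e.\ \(t\); this is where the product rule for absolutely continuous functions is needed. The derivative equals
\[
\Phi^\dagger(A^\dagger+A)\Phi=0,
\]
because each \(H_j\) is Hermitian and each \(u_j\) is real, so \(A(t)\) is skew-Hermitian. An absolutely continuous function with a.e.\ zero derivative is constant, so \(\Phi^\dagger\Phi\equiv I\). Combined with invertibility, this gives \(\Phi(t,s)^*=\Phi(t,s)^{-1}=\Phi(s,t)\) and norm preservation. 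It follows that \(\psi(t)\in\mathcal S_{\CC^d}\) whenever \(\psi_0\in\mathcal S_{\CC^d}\).

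\textbf{Main obstacle.} The delicate points are measure-theoretic. Every step must be done in integral form, because \(A\) is only \(L^1\). Joint strong continuity in \((t,s)\) is the part needing the most care. I would handle it through the uniform convergence of the Peano--Baker series and the uniform continuity of \(\tau\mapsto\int_0^\tau\|A\|\).
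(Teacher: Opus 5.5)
Your proof is correct, and its skeleton matches the paper's. Both arguments rest on Carath\'eodory existence and uniqueness for a linear system with $\|A(\cdot)\|\in L^1$. Both then use the fact that $A(t)$ is skew-Hermitian a.e., so that $\frac{\rd}{\rd t}(\Phi^\dagger\Phi)=0$ a.e. by the product rule for absolutely continuous functions.

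The real difference is how the two-parameter propagator is produced. The paper does not prove existence and uniqueness itself; it cites Sontag's theorem via $\CC^d\cong\RR^{2d}$. It builds only the one-parameter fundamental matrix $U(t)$ with $U(0)=I$, shows that $U(t)$ is unitary, and then \emph{defines} $\Phi(t,s):=U(t)U(s)^*$. With that definition, joint strong continuity in $(t,s)$ follows at once from continuity of $U$. The identities $\Phi(t,s)^*=\Phi(s,t)$ and $\psi(t)=\Phi(t,s)\psi(s)$ then follow algebraically from unitarity of $U$, with no cocycle argument.

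You instead construct $\Phi(\cdot,s)$ separately for every $s$ through the Peano--Baker series. This forces you to prove joint continuity by hand, which is the step you flag as the main obstacle. It also forces you to obtain invertibility, and the identification $\Phi(t,s)^{-1}=\Phi(s,t)$, from the cocycle law, which needs uniqueness for arbitrary initial times in both time directions.

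Your route is self-contained and gives an explicit series representation of the propagator. The paper's route is shorter and makes the continuity issue disappear, at the cost of leaning on an external existence theorem. You could combine the two: keep your Picard/Gr\"onwall argument only for $U(t)=\Phi(t,0)$, then set $\Phi(t,s):=U(t)U(s)^\dagger$ once unitarity of $U$ is established.
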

\begin{proof}[Proof of Proposition~\ref{prop:wellposed-AC}]
Since $u_j\in L^2([0,T])\subset L^1([0,T])$ on the finite interval $[0,T]$, the matrix-valued function $A$ is measurable and satisfies
\begin{equation}
    \|A(t)\|
    \leq
    \|H_0\|
    +
    \sum_{j=1}^m |u_j(t)|\,\|H_j\|,
\end{equation}
with $\|A(\cdot)\|\in L^1([0,T])$. After extending $A$ outside $[0,T]$, if necessary, the standard existence and uniqueness theory for differential equations with measurable time dependence and integrable Lipschitz bounds yields a unique absolutely continuous solution; see Theorem~54 and Proposition~C.3.8 of Ref.~\cite{sontag1998mathematical}. 
The result applies to the present complex-valued system by identifying $\CC^d$ with $\RR^{2d}$. The solution satisfies the corresponding integral equation
\begin{equation}
    \psi(t)
    =
    \psi_0+\int_0^t A(\tau)\psi(\tau)\,\rd\tau.
\end{equation}

Applying the same result to
\begin{equation}
    \dot U(t)=A(t)U(t),
    \qquad
    U(0)=I,
\end{equation}
gives a unique absolutely continuous fundamental matrix $U(t)$.
Since the $H_j$ are Hermitian and the inputs $u_j$ are real-valued, $ A(t)^*=-A(t) $ for almost every $t$. Hence, using the product rule for absolutely
continuous matrix-valued functions,
\begin{align}
    \frac{\rd}{\rd t}\bigl(U(t)^*U(t)\bigr)
    &=
    U(t)^*
    \bigl(A(t)^*+A(t)\bigr)
    U(t)=0
\end{align}
for almost every $t$. Since $U(0)=I$, it follows that $ U(t)^*U(t)=I,$ so $U(t)$ is unitary.

Defining $    \Phi(t,s):=U(t)U(s)^*,$ then $\Phi(s,s)=I$ and
\begin{equation}
    \frac{\partial}{\partial t}\Phi(t,s)
    =
    A(t)\Phi(t,s)
\end{equation}
for almost every $t$. Uniqueness gives
$\psi(t)=\Phi(t,s)\psi(s)$. Moreover, since $U$ is continuous,
$\Phi$ is strongly continuous in $(t,s)$, and unitarity gives $ \Phi(t,s)^* = \Phi(s,t).$
Consequently, $ \|\psi(t)\| = \|\psi_0\|,$ which also proves the invariance of $\mathcal S_{\CC^d}$.
\end{proof}
\subsection{Proof of Proposition~\ref{prop:image-output-gramian}}\label{proof:image-output-gramian}
\begin{proof}
By Proposition~\ref{prop:wellposed-AC}, $\Phi_{\bar u}(T,t)$ is unitary and $\|\bar\psi(t)\|=1$. Since $C(T)$ is a linear map between finite-dimensional spaces, for every $j=1,\ldots,m$,
\begin{equation}
    \left\|
        C(T)\!\left[
            \Phi_{\bar u}(T,t)
            (-\mathrm{i}H_j)\bar\psi(t)
        \right]
    \right\|
    \leq
    \|C(T)\|\,\|H_j\|.
\end{equation}
Therefore, from the integral representation of $J_y$ and the
Cauchy--Schwarz inequality,
\begin{align}
    \|J_y[\delta u]\|
    &\leq
    \int_0^T
    \left\|
        \sum_{j=1}^m
        C(T)\!\left[
            \Phi_{\bar u}(T,t)
            (-\mathrm{i}H_j)\bar\psi(t)
        \right]
        \delta u_j(t)
    \right\|
    \,\rd t
    \notag\\
    &\leq
    \sqrt{T}\,
    \|C(T)\|
    \left(
        \sum_{j=1}^m \|H_j\|^2
    \right)^{1/2}
    \|\delta u\|_{L^2}.
\end{align}
Hence, $J_y: L^2([0,T];\mathbb R^m) \longrightarrow \mathbb R^r$ is a bounded linear operator.

Since $W(T)=J_yJ_y^*$, for every $z\in\mathbb R^r$,
\begin{equation}
    z^\top W(T)z
    =
    \langle J_y^*z,J_y^*z\rangle_{L^2}
    =
    \|J_y^*z\|_{L^2}^2.
\end{equation}
Thus,
\begin{equation}
    W(T)\succeq0,
    \qquad
    \ker W(T)=\ker J_y^*.
\end{equation}

By the standard range--kernel relation for adjoint operators \cite[Theorem~10.13(d)]{axler2020measure},
\begin{equation}
    \overline{\operatorname{Im}J_y}
    =
    (\ker J_y^*)^\perp.
\end{equation}
Since the codomain $\mathbb R^r$ is finite dimensional,
$\operatorname{Im}J_y$ is closed. Moreover, $W(T)$ is self-adjoint
on $\mathbb R^r$. Hence,
\begin{align}
    \operatorname{Im}W(T) =(\ker W(T))^\perp=(\ker J_y^*)^\perp=\operatorname{Im}J_y=\mathcal R_y(T).
\end{align}

Finally, since $W(T)\succeq0$,
\begin{align}
    W(T)\succ0
    \Longleftrightarrow
    \ker W(T)=\{0\}\Longleftrightarrow
    \operatorname{Im}W(T)=\mathbb R^r\Longleftrightarrow\operatorname{Im}J_y=\mathbb R^r.
\end{align}
The last condition is equivalent to the surjectivity of $J_y$ and,
by definition, to output controllability at time $T$ along the nominal
trajectory.
\end{proof}

\subsection{Proof of the Minimum Energy Property}\label{app:min-output-energy-proof}
\begin{proof}
By Proposition~\ref{prop:image-output-gramian}, $z\in\mathcal R_y(T)=\operatorname{Im}J_y$.
Since $J_y$ is bounded and its codomain $\mathbb R^r$ is finite dimensional, $\operatorname{Im}J_y$ is closed. Hence, the standard minimum-norm theorem for bounded operators between Hilbert spaces
\cite[Sec.~6.10, Theorem~1]{luenberger1997optimization} gives the unique
minimum-norm solution of $J_y[\delta u]=z$ as $ \delta u^\star=J_y^\dagger z.$  
Using $W(T)=J_yJ_y^*$ and the standard pseudoinverse identity \cite[Sec.~6.11]{luenberger1997optimization}, $ J_y^\dagger=J_y^*(J_yJ_y^*)^\dagger,$
we obtain $ \delta u^\star=J_y^*W(T)^+z.$

Finally, by Proposition~\ref{prop:image-output-gramian}, $z\in\operatorname{Im}W(T)$, so $    W(T)W(T)^+z=z.$
Therefore,
\begin{align}
    \|\delta u^\star\|_{L^2}^2
    &=
    \left\langle
        J_yJ_y^*W(T)^+z,\,
        W(T)^+z
    \right\rangle_{\mathbb R^r}
    =
    \left\langle
        W(T)W(T)^+z,\,
        W(T)^+z
    \right\rangle_{\mathbb R^r} =
    z^\top W(T)^+z.
\end{align}
Hence, \eqref{eq:minimum-energy-property} holds.

\end{proof}

\section{Proofs for Section~IV}
\paragraph{Storage cost of an MPS.}

We briefly explain why an MPS with bond dimension \(\chi\) has storage cost
linear in the system size and quadratic in \(\chi\).  The bond dimension
bounds the Schmidt rank across every bipartition of the one-dimensional
chain.  Consequently, at each site the state can be specified by data whose
size is proportional to the product of the left and right auxiliary
dimensions, both bounded by \(\chi\), together with the fixed local physical
dimension.  Since the local physical dimension is fixed for qubits, the
number of parameters per site is of order \(\chi^2\).  Summing over \(n\)
sites gives the storage scaling
\[
    \mathcal{O}(n\chi^2).
\]
In contrast, a generic \(n\)-qubit state vector requires storing \(2^n\)
complex amplitudes.  Therefore, when the relevant states admit accurate
approximations with moderate bond dimension, the MPS representation reduces
the storage cost compared with the full state-vector representation.

\subsection{Error Analysis for the MPS/TEBD Approximation}
\label{app:mps-error}

We prove the error bounds used in Section~\ref{subsec:mps-error}.
Throughout this subsection, we consider the uniform time grid
\[
    t_k=kh,
    \qquad
    k=0,\ldots,N,
    \qquad
    Nh=T.
\]
Since the nominal dynamics used to compute the V-CAS is generated by the time-independent drift Hamiltonian $H_0$, let
\[
    U:=e^{-\mathrm{i}hH_0}
\]
denote the exact one-step propagator.
Let $S$ denote the corresponding second-order symmetric
Trotter step.  We use the standard product-formula estimate
\begin{equation}
    \|U-S\|_2
    \leq
    C_{\mathrm{Tr}}h^3,
    \label{eq:app-local-trotter}
\end{equation}
where $C_{\mathrm{Tr}}>0$ is independent of $h$.
For example, if
\[
    H_0=H_{\mathrm A}+H_{\mathrm B},
    \qquad
    S=
    e^{-\mathrm{i}hH_{\mathrm A}/2}
    e^{-\mathrm{i}hH_{\mathrm B}}
    e^{-\mathrm{i}hH_{\mathrm A}/2},
\]
one may take
\begin{equation}
    C_{\mathrm{Tr}}
    =
    \frac{1}{12}
    \left\|
        [H_{\mathrm B},[H_{\mathrm B},H_{\mathrm A}]]
    \right\|_2
    +
    \frac{1}{24}
    \left\|
        [H_{\mathrm A},[H_{\mathrm A},H_{\mathrm B}]]
    \right\|_2 .
    \label{eq:app-trotter-constant}
\end{equation}

\subsubsection{Schmidt-truncation error}

We first recall the error associated with a single Schmidt
truncation.  Let
\[
    \phi
    =
    \sum_{\alpha=1}^{R}
    s_\alpha
    |L_\alpha\rangle |R_\alpha\rangle
\]
be a Schmidt decomposition, where the Schmidt coefficients are
ordered as $ s_1\geq s_2\geq\cdots\geq0.$

For a normalized forward state, define the discarded weight by $ \delta:=\sum_{\alpha>\chi}s_\alpha^2.$
The normalized rank-$\chi$ truncation is
\[
    \widetilde\phi_\chi
    :=
    \frac{
        \sum_{\alpha=1}^{\chi}
        s_\alpha
        |L_\alpha\rangle |R_\alpha\rangle
    }{\sqrt{1-\delta}}.
\]
Since $ \langle\phi,\widetilde\phi_\chi\rangle=\sqrt{1-\delta}, $ we obtain
\begin{align}
    \|\phi-\widetilde\phi_\chi\|_2^2
    &=
    2\left(1-\sqrt{1-\delta}\right)\leq
    2\delta.
    \label{eq:app-normalized-cut}
\end{align}

For the backward observable states, normalization is not
preserved and no renormalization is performed after truncation.
For a possibly unnormalized state $\phi$, define the absolute
discarded weight by
\begin{equation}
    \delta
    :=
    \sum_{\alpha>\chi}s_\alpha^2.
    \label{eq:app-unnormalized-discarded}
\end{equation}
If
\[
    P_\chi\phi
    :=
    \sum_{\alpha=1}^{\chi}
    s_\alpha
    |L_\alpha\rangle |R_\alpha\rangle,
\]
then orthogonality of the Schmidt vectors gives the exact identity
\begin{equation}
    \|\phi-P_\chi\phi\|_2
    =
    \sqrt{\delta}.
    \label{eq:app-unnormalized-cut}
\end{equation}
Moreover,
\begin{equation}
    \|P_\chi\phi\|_2
    \leq
    \|\phi\|_2.
    \label{eq:app-cut-contraction}
\end{equation}

For each forward TEBD step $q=1,\ldots,N$, let
$B_q^\psi$ denote the number of Schmidt truncations performed
during that step, and let
\[
    \delta_{q,b}^{\psi},
    \qquad
    b=1,\ldots,B_q^\psi,
\]
be their discarded weights.
Since the local gates between successive truncations are unitary,
repeated application of the triangle inequality and
\eqref{eq:app-normalized-cut} gives a total truncation error
during step $q$ bounded by
\begin{equation}
    \sum_{b=1}^{B_q^\psi}
    \sqrt{2\delta_{q,b}^{\psi}}.
    \label{eq:app-forward-one-step-cut}
\end{equation}

Similarly, for each backward step
$q=0,\ldots,N-1$ and each observable $l=1,\ldots,r$, let
$B_{q,l}^{z}$ denote the number of Schmidt truncations and let
\[
    \delta_{q,l,b}^{z},
    \qquad
    b=1,\ldots,B_{q,l}^{z},
\]
denote their absolute discarded weights.
The total truncation error during that backward step is bounded by
\begin{equation}
    \sum_{b=1}^{B_{q,l}^{z}}
    \sqrt{\delta_{q,l,b}^{z}}.
    \label{eq:app-backward-one-step-cut}
\end{equation}

Finally, let $B_{N,l}^{\mathrm{term}}$ denote the number of
truncations used when forming the terminal MPS
$O_l\widehat\psi_N$, and let
\[
    \delta_{N,l,b}^{\mathrm{term}},
    \qquad
    b=1,\ldots,B_{N,l}^{\mathrm{term}},
\]
denote their absolute discarded weights.

\subsubsection{Forward-state error}

The exact forward state satisfies
\begin{equation}
    \bar\psi_k
    =
    U\bar\psi_{k-1},
    \qquad
    k=1,\ldots,N.
    \label{eq:app-forward-exact}
\end{equation}
The approximate state $\widehat\psi_k$ is obtained by applying
the Trotter step $S$ to $\widehat\psi_{k-1}$ and performing the
Schmidt truncations described above.

Define
\[
    e_k^\psi
    :=
    \|\bar\psi_k-\widehat\psi_k\|_2,
    \qquad
    k=0,\ldots,N.
\]
Since both $\bar\psi_{k-1}$ and $\widehat\psi_{k-1}$ are
normalized, \eqref{eq:app-local-trotter} and
\eqref{eq:app-forward-one-step-cut} imply
\begin{align}
    e_k^\psi
    &\leq
    \|U\bar\psi_{k-1}-S\widehat\psi_{k-1}\|_2
    +
    \sum_{b=1}^{B_k^\psi}
    \sqrt{2\delta_{k,b}^{\psi}}
    \nonumber\\
    &\leq
    \|U(\bar\psi_{k-1}-\widehat\psi_{k-1})\|_2
    +
    \|(U-S)\widehat\psi_{k-1}\|_2
    +
    \sum_{b=1}^{B_k^\psi}
    \sqrt{2\delta_{k,b}^{\psi}}
    \nonumber\\
    &\leq
    e_{k-1}^\psi
    +
    C_{\mathrm{Tr}}h^3
    +
    \sum_{b=1}^{B_k^\psi}
    \sqrt{2\delta_{k,b}^{\psi}}.
    \label{eq:app-forward-recursion}
\end{align}
Since $e_0^\psi=0$, summing
\eqref{eq:app-forward-recursion} from $q=1$ to $q=k$ gives
\begin{align}
    e_k^\psi
    &\leq
    kC_{\mathrm{Tr}}h^3
    +
    \sum_{q=1}^{k}
    \sum_{b=1}^{B_q^\psi}
    \sqrt{2\delta_{q,b}^{\psi}}
    =
    t_kC_{\mathrm{Tr}}h^2
    +
    \sum_{q=1}^{k}
    \sum_{b=1}^{B_q^\psi}
    \sqrt{2\delta_{q,b}^{\psi}}.
    \label{eq:app-forward-total}
\end{align}
Therefore, with \eqref{eq:error}

we obtain the left part of \eqref{eq:backward-state-error-main}

\subsubsection{Backward-state error}

For each observable $O_l$, $l=1,\ldots,r$, let
\[
    L_l:=\|O_l\|_2.
\]
The exact terminal and backward states satisfy
\begin{align}
    z_{N,l}
    &=
    O_l\bar\psi_N,
    \label{eq:app-z-terminal-exact}\\
    z_{k,l}
    &=
    U^\dagger z_{k+1,l},
    \qquad
    k=0,\ldots,N-1.
    \label{eq:app-z-backward-exact}
\end{align}

The approximate terminal state $\widehat z_{N,l}$ is obtained
from $O_l\widehat\psi_N$ by MPS compression, without
renormalization.  Hence,
\begin{align}
    \|z_{N,l}-\widehat z_{N,l}\|_2
    &\leq
    \|O_l(\bar\psi_N-\widehat\psi_N)\|_2
    +
    \sum_{b=1}^{B_{N,l}^{\mathrm{term}}}
    \sqrt{\delta_{N,l,b}^{\mathrm{term}}}
    \leq
    L_l
    \left(
        E_N^{\psi,\mathrm{Tr}}
        +
        E_N^{\psi,\mathrm{cut}}
    \right)
    +
    \sum_{b=1}^{B_{N,l}^{\mathrm{term}}}
    \sqrt{\delta_{N,l,b}^{\mathrm{term}}}.
    \label{eq:app-terminal-z-error}
\end{align}

We next note that
\begin{equation}
    \|\widehat z_{k,l}\|_2
    \leq
    L_l,
    \qquad
    k=0,\ldots,N.
    \label{eq:app-z-norm}
\end{equation}
Indeed,
$\|\widehat\psi_N\|_2=1$ implies
\[
    \|O_l\widehat\psi_N\|_2
    \leq L_l.
\]
The terminal compression does not increase the norm by
\eqref{eq:app-cut-contraction}.  Each backward Trotter gate is
unitary, and each subsequent non-normalized Schmidt truncation is
also norm non-increasing.  Thus \eqref{eq:app-z-norm} follows
recursively.

Define
\[
    e_{k,l}^z
    :=
    \|z_{k,l}-\widehat z_{k,l}\|_2.
\]
For $k=0,\ldots,N-1$, we have
\begin{align}
    e_{k,l}^z
    &\leq
    \|U^\dagger z_{k+1,l}
       -S^\dagger\widehat z_{k+1,l}\|_2
    +
    \sum_{b=1}^{B_{k,l}^{z}}
    \sqrt{\delta_{k,l,b}^{z}}
    \nonumber\\
    &\leq
    \|U^\dagger
        (z_{k+1,l}-\widehat z_{k+1,l})\|_2
    +
    \|(U^\dagger-S^\dagger)
        \widehat z_{k+1,l}\|_2
    +
    \sum_{b=1}^{B_{k,l}^{z}}
    \sqrt{\delta_{k,l,b}^{z}}
    \nonumber\\
    &\leq
    e_{k+1,l}^z
    +
    L_lC_{\mathrm{Tr}}h^3
    +
    \sum_{b=1}^{B_{k,l}^{z}}
    \sqrt{\delta_{k,l,b}^{z}},
    \label{eq:app-backward-recursion}
\end{align}
where we used $ \|U^\dagger-S^\dagger\|_2=\|U-S\|_2$ together with \eqref{eq:app-z-norm}.

Summing \eqref{eq:app-backward-recursion} from
$q=k$ to $q=N-1$ and using
\eqref{eq:app-terminal-z-error} yields
\begin{align}
    e_{k,l}^z
    &\leq
    L_lE_N^{\psi,\mathrm{Tr}}
    +
    (N-k)L_lC_{\mathrm{Tr}}h^3
    +
    L_lE_N^{\psi,\mathrm{cut}}
    +
    \sum_{b=1}^{B_{N,l}^{\mathrm{term}}}
    \sqrt{\delta_{N,l,b}^{\mathrm{term}}}
    +
    \sum_{q=k}^{N-1}
    \sum_{b=1}^{B_{q,l}^{z}}
    \sqrt{\delta_{q,l,b}^{z}}.
    \label{eq:app-backward-total-1}
\end{align}
Since
\[
    E_N^{\psi,\mathrm{Tr}}
    =
    TC_{\mathrm{Tr}}h^2,
    \qquad
    (N-k)h=T-t_k,
\]
the Trotter part becomes
\begin{align}
    L_lE_N^{\psi,\mathrm{Tr}}
    +(N-k)L_lC_{\mathrm{Tr}}h^3
    &=
    L_l(2T-t_k)C_{\mathrm{Tr}}h^2.
\end{align}
Thus, defining \eqref{eq:backward-trotter-budget-main}, \eqref{eq:backward-cut-budget-main}, we obtain the right part of \eqref{eq:backward-state-error-main}.

\subsubsection{Sensitivity error}

Let
\[
    a_j:=\|H_j\|_2,
    \qquad
    j=1,\ldots,m.
\]
Using the exact and approximate sensitivities,
\begin{align*}
    (y_{j,k})_l
    &=
    2\operatorname{Re}
    \langle
        z_{k,l},
        -\mathrm{i}H_j\bar\psi_k
    \rangle,\quad
    (\widehat y_{j,k})_l
    =
    2\operatorname{Re}
    \langle
        \widehat z_{k,l},
        -\mathrm{i}H_j\widehat\psi_k
    \rangle,
\end{align*}
we obtain
\begin{align}
    |(y_{j,k})_l-(\widehat y_{j,k})_l|
    \leq
    2
    \left|
        \left\langle
            z_{k,l}-\widehat z_{k,l},
            -\mathrm{i}H_j\bar\psi_k
        \right\rangle
    \right|+2
    \left|
        \left\langle
            \widehat z_{k,l},
            -\mathrm{i}H_j
            (\bar\psi_k-\widehat\psi_k)
        \right\rangle
    \right|
    \leq
    2a_j
    \left(
        e_{k,l}^{z}
        +
        L_le_k^\psi
    \right).
    \label{eq:app-component-sensitivity}
\end{align}
Therefore,
\begin{align}
    |(y_{j,k})_l-(\widehat y_{j,k})_l|
    &\leq
    \eta_{j,k,l}^{\mathrm{Tr}}
    +
    \eta_{j,k,l}^{\mathrm{cut}},
    \label{eq:app-component-sensitivity-budget}
\end{align}
where $\eta_{j,k,l}^{\mathrm{Tr}}$ and $\eta_{j,k,l}^{\mathrm{cut}}$ are defined by \eqref{eq:error-eta}.
By triangle inequality in $\RR^r$, we have \eqref{eq:sensitivity-error-main}.

\subsubsection{Proof of Theorem~\ref{thm:mps-gramian-error}}

For any $x,y\in\RR^r$,
\begin{align}
    xx^\top-yy^\top
    &=
    (x-y)y^\top
    +
    x(x-y)^\top.
\end{align}
Hence,
\begin{align}
    \|xx^\top-yy^\top\|_2
    \leq
    \|x-y\|_2\|y\|_2
    +
    \|x\|_2\|x-y\|_2
    =
    (\|x\|_2+\|y\|_2)
    \|x-y\|_2.
    \label{eq:app-outer-product}
\end{align}

Using the definitions of the discrete Gramians in
Section~\ref{practical-construction},
\begin{equation}
    \begin{split}
        \|W_j^h(T)-\widehat W_j^h(T)\|_2
&\le
h\sum_{k=0}^{N}\omega_k
\left\|
y_{j,k}y_{j,k}^{\top}
-
\widehat y_{j,k}\widehat y_{j,k}^{\top}
\right\|_2
\\
&\le
h\sum_{k=0}^{N}\omega_k
\bigl(
\|y_{j,k}\|_2+\|\widehat y_{j,k}\|_2
\bigr)
\|y_{j,k}-\widehat y_{j,k}\|_2.
    \end{split}\label{proof:differential-of-W}
\end{equation}

Equation~\eqref{eq:sensitivity-error-main} implies
\[
    \|y_{j,k}\|_2
    \leq
    \|\widehat y_{j,k}\|_2
    +
    \eta_{j,k}^{\mathrm{Tr}}
    +
    \eta_{j,k}^{\mathrm{cut}}.
\]
Substituting it into \eqref{proof:differential-of-W} gives \eqref{eq:gramian-error-decomposition-main}.

\section{Numerical Experiment Settings}
\label{sec:numerical_experiment_settings}

This appendix provides the definitions of the static centrality measures and the detailed model parameters used in the comparison in Sec.~\ref{subsec:compare}.
All models consist of \(n=8\) qubits, with Pauli operators \(\sigma_i^x\), \(\sigma_i^y\), and \(\sigma_i^z\) acting on site \(i\).
The initial state is
\begin{equation}
    \bar\psi(0)=|0\rangle^{\otimes 8},
\end{equation}
and the uncontrolled dynamics are evolved up to \(T=0.6\) with time step \(\Delta t=0.02\).
For the V-CAS comparison, the local virtual control Hamiltonians and observables are chosen as
\begin{equation}
    H_j=O_j=\sigma_j^z,
    \qquad j=1,\ldots,8.
\end{equation}

\paragraph{Static centrality measures.}
The static centralities are evaluated from the uncontrolled terminal state
\(\bar\psi(T)\). Define
\[
    \rho(T)=|\bar\psi(T)\rangle\langle\bar\psi(T)|,
    \qquad
    \rho_i=\operatorname{Tr}_{\{i\}^c}\rho(T),
    \qquad
    \rho_{ij}=\operatorname{Tr}_{\{i,j\}^c}\rho(T).
\]
With the von Neumann entropy
\(S(\rho)=-\operatorname{Tr}(\rho\log\rho)\), the mutual information is
\[
    I_{ij}=S(\rho_i)+S(\rho_j)-S(\rho_{ij}),
\]
and the mutual-information strength centrality
\cite{sundar2018complex} is
\[
    c_i^{\mathrm{MI}}=\sum_{j\neq i} I_{ij}.
\]

For concurrence strength \cite{nokkala2024complex}, let
\[
    \widetilde{\rho}_{ij}
    =
    (\sigma^y\otimes\sigma^y)\rho_{ij}^{*}
    (\sigma^y\otimes\sigma^y).
\]
If
\(\lambda_1\geq\lambda_2\geq\lambda_3\geq\lambda_4\)
are the square roots of the eigenvalues of
\(\rho_{ij}\widetilde{\rho}_{ij}\), define
\[
    C_{ij}
    =
    \max\{0,\lambda_1-\lambda_2-\lambda_3-\lambda_4\},
    \qquad
    c_i^{\mathrm{C}}
    =
    \sum_{j\neq i}C_{ij}.
\]

For mutual-information betweenness, the mutual-information network is assigned edge lengths
\[
    \ell_{ij}=\frac{1}{I_{ij}+\varepsilon},
    \qquad
    \varepsilon=10^{-15},
\]
for \(I_{ij}>\varepsilon\), with the remaining edges omitted.
The standard weighted betweenness centrality of this graph is denoted by \(c_i^{\mathrm{B}}\).
All three centrality vectors are normalized to unit \(\ell_1\) norm before computing the correlations in Table~\ref{tab:static-centrality-correlation}.

We consider four representative systems: a central-spin model, a small-world spin network, a modular two-cluster model, and a disordered XXZ transport chain.
The first three models use random seed \(1\).
For the disordered XXZ model, the realization is generated with random seed \(3\), as specified below.
The model parameters are fixed throughout the centrality comparison.

We also verified finite-time feasibility and uniqueness of the V-CAS for all four systems used in Table~\ref{tab:static-centrality-correlation}.
For each system, let
\[
D(T)
=
\begin{bmatrix}
\operatorname{vech}(W_1(T)-W_8(T)) &
\cdots &
\operatorname{vech}(W_7(T)-W_8(T))
\end{bmatrix}.
\]
At \(T=0.6\), the total Gramian is positive definite and
\[
    \operatorname{rank}D(T)=7
\]
for all four systems.
The smallest eigenvalues of the total Gramians are \(7.42\times10^{-9}\), \(6.50\times10^{-4}\), \(1.30\times10^{-3}\), and \(2.05\times10^{-5}\) for the central-spin, small-world, modular two-cluster, and
disordered XXZ systems, respectively.
The corresponding smallest singular values of \(D(T)\) are \(1.61\times10^{-8}\), \(7.67\times10^{-4}\), \(1.70\times10^{-3}\), and \(2.50\times10^{-4}\).
Thus, the allocation problem is feasible and the optimal allocation is unique for every numerical instance used in the static-centrality comparison.

\subsection{Models for Comparison With Existing Centrality Metrics}

\subsubsection{Central-Spin Model (\texttt{central\_spin})}

The central-spin model consists of a central spin at site \(1\) strongly coupled to the remaining seven bath spins.
The Hamiltonian is
\begin{equation}
\begin{aligned}
H_0
={}&
\sum_{i=1}^{8} h_i \sigma_i^z
+
\sum_{i=2}^{8}
J_{1i}
\left(
\sigma_1^x\sigma_i^x
+
\sigma_1^y\sigma_i^y
+
\sigma_1^z\sigma_i^z
\right)
+
\sum_{\substack{2\le i<j\le8\\(i,j)\in\mathcal E_{\rm bath}}}
J_{ij}^{\rm bath}\,
\sigma_i^x\sigma_j^x .
\end{aligned}
\end{equation}
The local fields are generated as
\begin{equation}
    h_i
    =
    0.5
    \left(
        1+0.35\,\xi_i^{(h)}
    \right),
\end{equation}
where
\(\xi_i^{(h)}\sim\mathcal N(0,1)\).
The central-spin couplings are
\begin{equation}
    J_{1i}
    =
    4.0
    \left(
        1+0.20\,\xi_i^{(c)}
    \right),
    \qquad
    i=2,\ldots,8,
\end{equation}
with
\(\xi_i^{(c)}\sim\mathcal N(0,1)\).

For every pair of bath spins, a bath--bath edge is independently included with probability
\[
    p_{\rm bath}=0.25.
\]
For each included edge,
\begin{equation}
    J_{ij}^{\rm bath}
    =
    0.08\,\xi_{ij}^{(b)},
    \qquad
    \xi_{ij}^{(b)}\sim\mathcal N(0,1).
\end{equation}
Thus, the system contains a dominant central-spin interaction together with weak random couplings within the bath.

\subsubsection{Small-World Model (\texttt{small\_world})}

The small-world model is constructed from a periodic ring in which each site is initially connected to its first- and second-nearest neighbors. 
Each edge is independently rewired with probability
\begin{equation}
    p_{\rm rewire}=0.25.
\end{equation}

The Hamiltonian is
\begin{equation}
H_0
=
\sum_{i=1}^{8}
\left(
h_i^{(z)}\sigma_i^z
+
0.4\,\sigma_i^x
\right)
+
\sum_{(i,j)\in\mathcal E_{\rm SW}}
J_{ij}
\left(
\sigma_i^z\sigma_j^z
+
0.15\,\sigma_i^x\sigma_j^x
\right),
\end{equation}
where \(\mathcal E_{\rm SW}\) denotes the edge set after rewiring.

The longitudinal fields are generated as
\begin{equation}
    h_i^{(z)}
    =
    0.25
    \left(
        1+0.25\,\xi_i^{(h)}
    \right),
    \qquad
    \xi_i^{(h)}\sim\mathcal N(0,1),
\end{equation}
and the interaction strengths are
\begin{equation}
    J_{ij}
    =
    1.0
    \left(
        1+0.15\,\xi_{ij}^{(J)}
    \right),
    \qquad
    \xi_{ij}^{(J)}\sim\mathcal N(0,1).
\end{equation}
This construction combines predominantly Ising-type interactions with weak transverse \(XX\) couplings and nonlocal shortcuts generated by the small-world rewiring.

\subsubsection{Modular Two-Cluster Model (\texttt{modular\_2c})}

The eight spins are divided into two communities,
\begin{equation}
    \mathcal A=\{1,2,3,4\},
    \qquad
    \mathcal B=\{5,6,7,8\}.
\end{equation}
For a pair of sites belonging to the same community, an edge is included independently with probability $ p_{\rm in}=0.7,$ whereas an edge connecting the two communities is included with probability $   p_{\rm out}=0.12.$

The Hamiltonian is
\begin{equation}
H_0
=
\sum_{i=1}^{8}
\left(
h_i^{(z)}\sigma_i^z
+
0.35\,\sigma_i^x
\right)
+
\sum_{(i,j)\in\mathcal E_{\rm mod}}
J_{ij}
\left(
\sigma_i^z\sigma_j^z
+
0.12\,\sigma_i^x\sigma_j^x
\right),
\end{equation}
where \(\mathcal E_{\rm mod}\) is the randomly generated modular edge set.

The longitudinal fields are
\begin{equation}
    h_i^{(z)}
    =
    0.25
    \left(
        1+0.4\,\xi_i^{(h)}
    \right),
    \qquad
    \xi_i^{(h)}\sim\mathcal N(0,1).
\end{equation}
For an included edge, the coupling is generated as
\begin{equation}
J_{ij}
=
\begin{cases}
1.2
\left(
1+0.2\,\xi_{ij}^{(J)}
\right),
&
i,j\in\mathcal A
\ \text{or}\
i,j\in\mathcal B,
\\[2mm]
0.25
\left(
1+0.2\,\xi_{ij}^{(J)}
\right),
&
i\in\mathcal A,\ j\in\mathcal B,
\end{cases}
\end{equation}
where
\(\xi_{ij}^{(J)}\sim\mathcal N(0,1)\).
Thus, both the connection probability and the characteristic interaction strength are larger within each community than between the two communities.

\subsubsection{Disordered XXZ Transport Model
(\texttt{disordered\_xxz\_transport})}

As a fourth example, we consider a one-dimensional disordered XXZ spin chain with nearest-neighbor interactions. The Hamiltonian is
\begin{equation}
\begin{aligned}
H_0
={}&
\sum_{i=1}^{8}
h_i\sigma_i^z
+
h_x
\sum_{i=1}^{8}
\sigma_i^x
+
\sum_{i=1}^{7}
J_i
\left(
\sigma_i^x\sigma_{i+1}^x
+
\sigma_i^y\sigma_{i+1}^y
+
\Delta\,
\sigma_i^z\sigma_{i+1}^z
\right).
\end{aligned}
\end{equation}
We set
\begin{equation}
    J=1,\qquad
    \Delta=1,\qquad
    h_x=0.25.
\end{equation}
The longitudinal fields are independently sampled from
\begin{equation}
    h_i\sim\operatorname{Unif}[-1.5,1.5],
    \qquad i=1,\ldots,8.
\end{equation}
The nearest-neighbor coupling strengths contain independent Gaussian disorder,
\begin{equation}
    J_i
    =
    1.0
    \left(
        1+0.25\,\xi_i^{(J)}
    \right),
    \qquad
    \xi_i^{(J)}\sim\mathcal N(0,1),
    \qquad i=1,\ldots,7.
\end{equation}
The realization used in the numerical comparison is generated with random seed \(3\).

This model provides a spatially local but strongly inhomogeneous example in which finite-time perturbation propagation is affected by both random on-site detunings and disordered exchange couplings.
It is therefore used to examine whether the proposed V-CAS identifies influence patterns that differ from those inferred from static correlation- and entanglement-based measures.

\subsection{Hard-Core-Boson Benchmark Details}
\label{app:hcb_benchmark}

This appendix provides the numerical details of the hard-core-boson benchmark in Sec.~\ref{subsec:control_experiment}.

\subsubsection{Physical models and output coordinates}

We consider a half-filled hard-core-boson chain of length
\[
    L=10,
    \qquad
    N_{\mathrm p}=5.
\]
Since the total particle number is conserved, the dynamics remain in the fixed-particle-number sector, whose Hilbert-space dimension is
\[
    \dim\mathcal H
    =
    \binom{10}{5}
    =
    252.
\]

The drift Hamiltonian is
\begin{equation}
    H_0
    =
    -\sum_{j=1}^{L-1}
    t_j
    \left(
        b_j^\dagger b_{j+1}
        +
        b_{j+1}^\dagger b_j
    \right)
    +
    V
    \sum_{j=1}^{L-1}
    n_jn_{j+1}
    +
    \sum_{j=1}^{L}
    h_j n_j,
    \label{eq:app_hcb_hamiltonian}
\end{equation}
where
\[
    n_j=b_j^\dagger b_j,
    \qquad
    V=1.
\]
For both geometries, the hopping amplitude is $t_j=1$.

A weak on-site random potential is added to each geometry:
\begin{equation}
    h_j
    =
    h_j^{\mathrm{geom}}
    +
    \delta_j,
    \qquad
    \delta_j
    =
    0.03
    \left(
        \xi_j-\frac12
    \right),
    \label{eq:app_hcb_disorder}
\end{equation}
where the \(\xi_j\) are generated independently from
\(\operatorname{Unif}[0,1)\).
The same disorder realization, generated using system seed $1$, is used for both geometries.
We consider two Hamiltonian configurations.
For the barrier geometry,
\[
    h_5^{\mathrm{geom}}
    =
    h_6^{\mathrm{geom}}
    =
    0.90,
\]
and all other deterministic on-site potentials vanish.

For the harmonically confined geometry,
\[
    h_j^{\mathrm{geom}}
    =
    0.10
    \left(
        j-\frac{L+1}{2}
    \right)^2.
\]

The virtual and physical control Hamiltonians are the local density potentials
\[
    H_j=n_j,
    \qquad
    j=1,\ldots,L.
\]
The nominal initial state is the charge-density-wave state
\[
    \bar\psi(0)
    =
    \ket{1010101010}.
\]

The physical density profile is
\[
    \rho(\psi)
    =
    \begin{bmatrix}
        \langle n_1\rangle_\psi&
        \cdots&
        \langle n_L\rangle_\psi
    \end{bmatrix}^{\top}.
\]
Since the total particle number is conserved, its uniform component contains no independent output information.
We therefore use
\begin{equation}
    y(\psi)=B\rho(\psi),
    \label{eq:app_hcb_output}
\end{equation}
where
\[
    B\boldsymbol{1}=0,
    \qquad
    BB^{\top}=I_{L-1}.
\]
Numerically, \(B\) is chosen as the Helmert zero-sum basis.
More explicitly, for \(q=1,\ldots,L-1\),
\[
    B_{qj}
    =
    \begin{cases}
        \displaystyle
        \frac{1}{\sqrt{q(q+1)}},
        & 1\le j\le q,
        \\[2mm]
        \displaystyle
        -\frac{q}{\sqrt{q(q+1)}},
        & j=q+1,
        \\[2mm]
        0,
        & j>q+1.
    \end{cases}
\]
Thus, the output dimension is
\[
    r=L-1=9.
\]

\subsubsection{Gramian and placement computation}

For each geometry, the individual Gramian contributions \(W_j(T)\), \(j=1,\ldots,L\), are computed along the uncontrolled trajectory with \(T=2\).
The Gramian integral is approximated by the trapezoidal rule using
\[
    M_{\mathrm{score}}=40
\]
uniform intervals, corresponding to
\[
    \Delta t_{\mathrm{score}}=0.05.
\]
Each quadrature interval is resolved by two second-order TEBD substeps.
The MPS calculations use maximum bond dimension
\(\chi_{\max}=24\) and singular-value cutoff \(10^{-10}\).

The weighted Gramian is
\begin{equation}
    W(p,T)
    =
    \sum_{j=1}^{L}
    p_jW_j(T).
    \label{eq:app_hcb_weighted_gramian}
\end{equation}
No additive regularization is applied to the individual Gramians, the weighted Gramian, or the subset Gramians.

The V-CAS is obtained by solving \eqref{opt:controllability} using an Armijo projected-gradient method based on \cite[Algorithm~1]{sato2026projected}.
The iteration starts from the uniform allocation, and positive definiteness is checked directly during the Armijo search.

The score-based two-site placement is
\[
    S_{\mathrm{score}}
    =
    \operatorname{top}_2(p^\star).
\]

For comparison, for each two-site subset
\(S\subset\{1,\ldots,L\}\), we define the equal-allocation Gramian
\begin{equation}
    W_S(T)
    =
    \frac{1}{2}
    \sum_{j\in S}W_j(T).
    \label{eq:app_hcb_subset_gramian}
\end{equation}
The discrete log-determinant placement is
\[
    S_{\mathrm D}
    \in
    \operatorname*{arg\,max}_{|S|=2}
    \log\det W_S(T),
\]
where a subset with a non-positive-definite Gramian lies outside the log-determinant domain.
The trace placement maximizes
\(\operatorname{Tr}W_S(T)\) over the same
\(\binom{10}{2}=45\) candidate pairs.

The physical heuristic is also target independent.
For both geometries, it selects the two central sites, $S_{physical} = \{5,6\}$.

The random pairs are generated with random seed \(300\).
All placement rules are determined once for each geometry and are independent of the terminal targets.

\subsubsection{Target generation}
\label{app:hcb_targets}

Terminal targets are generated using all \(L=10\) local density controls.
The teacher controls use
\[
    M_{\mathrm{control}}=12
\]
piecewise-constant time intervals.

We use five target families,
\[
    \mathcal F
    =
    \{
        \mathrm{global},
        \mathrm{center},
        \mathrm{edges},
        \mathrm{staggered},
        \mathrm{tilt}
    \}.
\]
For each family, one teacher-control direction is generated.

For the global family, independent standard-normal random variables are generated for every site and every control interval.
Each site's temporal sequence is subsequently smoothed by a moving-average window of length three.

For the other four families, a single random temporal profile \(a_q\) is generated and smoothed in the same manner, and the site-dependent teacher direction is constructed as
\[
    d_{qj}=a_q g_j.
\]

For the center family,
\[
    g_j^{\mathrm{center}}
    =
    \exp
    \left[
        -\frac12
        \left(
            \frac{j-(L+1)/2}{w_{\mathrm c}}
        \right)^2
    \right],
    \qquad
    w_{\mathrm c}
    =
    \max\left\{0.75,\frac{L}{10}\right\}.
\]

For the edge family,
\[
    g_j^{\mathrm{edge}}
    =
    \exp
    \left[
        -\frac12
        \left(
            \frac{j-1}{w_{\mathrm e}}
        \right)^2
    \right]
    +
    \exp
    \left[
        -\frac12
        \left(
            \frac{j-L}{w_{\mathrm e}}
        \right)^2
    \right],
\]
where
\[
    w_{\mathrm e}
    =
    \max\left\{0.55,\frac{L}{14}\right\}.
\]

For the staggered family,
\[
    g_j^{\mathrm{stag}}
    =
    (-1)^{j-1},
\]
and for the tilted family,
\[
    g_j^{\mathrm{tilt}}
    =
    -1
    +
    \frac{2(j-1)}{L-1},
    \qquad
    j=1,\ldots,L.
\]

Each generated teacher direction is normalized such that
\[
    \max_{q,j}|d_{qj}|=1.
\]

For system index \(s=1,2\) and family index \(f=1,\ldots,5\), the target-generation random seed is
\begin{equation}
    s_{s,f}
    =
    101+1000s+100f,
    \label{eq:app_hcb_target_seed}
\end{equation}
where the family order is
\[
    \mathrm{global},
    \mathrm{center},
    \mathrm{edges},
    \mathrm{staggered},
    \mathrm{tilt}.
\]

For each family, five candidate teacher directions are generated and first evaluated at the reference amplitude
\[
    a_{\mathrm{ref}}=0.30.
\]
Let \(y_0(T)\) denote the uncontrolled terminal output.
A candidate is preferred when
\[
    0.02
    \le
    \left\|
        y_{\mathrm{teacher}}(T)-y_0(T)
    \right\|_2
    \le
    0.14.
\]
Among candidates with the same feasibility status, the direction whose terminal-output displacement is closest to the midpoint \(0.08\) is selected.

After the teacher direction has been selected, exactly the same direction is scaled to the three amplitudes
\[
    0.15,\qquad0.30,\qquad0.50.
\]
This construction avoids confounding the target-amplitude comparison with a change of target direction.

Each geometry therefore contains
\[
    5\times1\times3=15
\]
target--amplitude cases, giving \(30\) cases over the two geometries.

\subsubsection{Nonlinear control optimization}
\label{app:hcb_grape}

For each target, the nonlinear terminal-output control problem is solved for the two-site placements returned by the target-independent selection rules described above.
If multiple selection rules return the same site pair, that pair is optimized only once.
Thus, the nonlinear optimization is not performed exhaustively over all \(\binom{10}{2}=45\) possible two-site placements.

The control horizon is divided into
\[
    M_{\mathrm{control}}=12
\]
piecewise-constant intervals, so that
\[
    h
    =
    \frac{T}{M_{\mathrm{control}}}
    =
    \frac{1}{6}.
\]
Each control interval is resolved by four second-order TEBD substeps.
The MPS calculations use
\(\chi_{\max}=24\) and singular-value cutoff \(10^{-10}\).
The controls satisfy
\[
    |u_{q,j}|\le0.5.
\]

For a selected pair \(S\), the nonlinear optimization objective is
\begin{equation}
    J(u)
    =
    \frac12
    \left\|
        y(T;u)-y_{\mathrm{tar}}
    \right\|_2^2
    +
    \frac{\alpha h}{2}
    \sum_{q=0}^{M_{\mathrm{control}}-1}
    \sum_{j\in S}
    u_{q,j}^2,
    \label{eq:app_hcb_grape_objective}
\end{equation}
with
\[
    \alpha=10^{-5}.
\]
The parameter \(\alpha\) is only a quadratic control-energy penalty in the nonlinear control problem and is unrelated to the controllability-Gramian computation.

The objective is minimized with L-BFGS-B using a forward MPS/TEBD trajectory and an adjoint MPS/TEBD sweep.
The gradient differentiates the control-dependent on-site gates in the chosen TEBD factorization; derivatives of the SVD-compression map are not included in the adjoint calculation.
The numerical optimization settings are
\[
    \mathrm{maxiter}=60,
    \qquad
    \mathrm{ftol}=10^{-12},
    \qquad
    \mathrm{gtol}=10^{-7},
    \qquad
    \mathrm{maxls}=30.
\]
Each optimization is initialized from the zero-control input, with no random restarts.

Performance is reported using the normalized terminal error
\[
    e_{\mathrm{norm}}
    =
    \frac{
        \|y(T;u^\star)-y_{\mathrm{tar}}\|_2
    }{
        \|y_0(T)-y_{\mathrm{tar}}\|_2
    }.
\]

\subsection{Finite-Time Uniqueness Diagnostics for the HCB Benchmark}
\label{app:hcb_uniqueness}

For the hard-core-boson experiment, the initial-response sufficient condition of Proposition~\ref{prop-unique} is not applicable.
The CDW initial state is an eigenstate of every local number operator
\(n_j\), and hence
\[
    C(0)
    \left[
        -\mathrm{i}n_j\bar\psi(0)
    \right]
    =
    0,
    \qquad
    j=1,\ldots,L.
\]

We therefore test affine independence of the finite-time Gramian contributions directly at \(T=2\).
Since the output dimension is \(r=9\), \(\operatorname{Sym}_9\) has dimension \(45\).
For each geometry, define
\begin{equation}
    D(T)
    =
    \begin{bmatrix}
        \operatorname{vech}
        \left(
            W_1(T)-W_{10}(T)
        \right)
        &
        \cdots
        &
        \operatorname{vech}
        \left(
            W_9(T)-W_{10}(T)
        \right)
    \end{bmatrix}
    \in
    \mathbb{R}^{45\times9}.
    \label{eq:app_hcb_affine_matrix}
\end{equation}
The ten matrices \(W_1(T),\ldots,W_{10}(T)\) are affinely independent if and only if
\[
    \operatorname{rank}D(T)=9.
\]

Numerical rank is evaluated using the threshold
\begin{equation}
    \tau_{\mathrm{rank}}
    =
    10^{-10}
    \sigma_{\max}(D(T)).
    \label{eq:app_hcb_rank_threshold}
\end{equation}
For both HCB geometries,
\[
    \operatorname{rank}_{\tau_{\mathrm{rank}}}
    D(T)
    =
    9.
\]
The smallest singular value of \(D(T)\) over the two geometries lies in the interval
\[
    7.49\times10^{-3}
    \le
    \sigma_{\min}(D(T))
    \le
    1.15\times10^{-2},
\]
which is well above the numerical rank threshold.

We also verify feasibility of the log-determinant problem by evaluating the total Gramian
\[
    W_{\mathrm{tot}}(T)
    =
    \sum_{j=1}^{L}W_j(T).
\]
Positive definiteness is checked directly from its eigenvalues using the numerical threshold
\[
    \tau_{\mathrm{PD}}
    =
    10^{-10}
    \max
    \left\{
        1,\,
        \max_\ell
        |\lambda_\ell(W_{\mathrm{tot}}(T))|
    \right\}.
\]
For all two geometries,
\[
    \lambda_{\min}
    \left(
        W_{\mathrm{tot}}(T)
    \right)
    >
    \tau_{\mathrm{PD}}.
\]

Hence, although the initial-response sufficient condition vanishes for the CDW state, both HCB geometries satisfy both finite-time feasibility and affine independence of the Gramian contributions at \(T=2\).
Accordingly, the numerical instances lie in the regime where the finite-time V-CAS is well defined and unique.

\subsection{MPS Accuracy and Runtime Benchmark}
\label{app:mps_benchmark}

This appendix provides the detailed settings for the MPS accuracy and runtime benchmark in Sec.~\ref{sec:mps_benchmark}.
The drift Hamiltonian is
\begin{equation}
    H_0
    =
    \sum_{j=1}^{n}
    \left(
        h_x X_j+h_j^z Z_j
    \right)
    +
    \sum_{j=1}^{n-1}
    J_j Z_j Z_{j+1},
\end{equation}
where
\[
    h_x=0.8,
\]
and the site-dependent longitudinal fields and nearest-neighbor couplings are generated as
\[
    h_j^z
    =
    0.2\left(1+0.5\xi_j\right),
    \qquad
    J_j
    =
    1.0\left(1+0.1\eta_j\right),
\]
with independent standard-normal random variables \(\xi_j,\eta_j\sim\mathcal{N}(0,1)\).
The realization is generated with random seed \(1\).

The initial state is
\[
    \bar{\psi}(0)=\ket{0}^{\otimes n},
\]
and the local virtual Hamiltonians and observables are
\[
    H_j=O_j=Z_j,
    \qquad j=1,\ldots,n.
\]
The V-CAS is evaluated over the horizon
\[
    T=0.6
\]
using the uniform time step
\[
    \Delta t=0.02.
\]
No additive regularization is applied in the V-CAS optimization.

For the MPS implementation, we use
\[
    \chi_{\max}=10,
    \qquad
    \varepsilon_{\mathrm{SVD}}=10^{-10}.
\]
Exact full-Hilbert-space calculations are performed for \(n=2,\ldots,10\), whereas the MPS calculations are performed for \(n=6,8,\ldots,30\) in increments of two.
The accuracy comparison is therefore performed on the common system sizes \(n=6,8,10\).

For each system size, the runtime is measured three times and the mean and standard deviation are reported.
One MPS warm-up calculation is performed before the timing measurements.
All calculations are performed within the same Windows 11 environment using Python 3.14.2 and NumPy 2.4.2 on an Intel x86-64 processor.




\bibliographystyle{iopart-num}
\bibliography{reference}



%




\end{document}